\documentclass[10pt,twocolumn,journal]{IEEEtran}

\usepackage{amsmath,amssymb,amsfonts,amsthm,mathtools,bm}
\usepackage{braket}
\usepackage{algorithm}
\usepackage{algorithmic}
\usepackage{array}
\usepackage{multirow}
\usepackage{booktabs,tabularx,makecell}
\usepackage[caption=false,font=footnotesize]{subfig}
\usepackage[inline,shortlabels]{enumitem}
\usepackage{textcomp}
\usepackage{stfloats}
\usepackage{url}
\usepackage{verbatim}
\usepackage{graphicx}
\usepackage{cite}
\usepackage{microtype}
\usepackage{xcolor}
\usepackage[colorlinks=true,allcolors=blue]{hyperref}
\usepackage{orcidlink}

\theoremstyle{remark}
\newtheorem{remark}{Remark}
\usepackage{tikz}
\usepackage{pgfplots}
\pgfplotsset{compat=1.17}
\usetikzlibrary{arrows,shapes,positioning,calc,patterns,decorations.pathreplacing}
\usetikzlibrary{fit,backgrounds}

\newtheorem{theorem}{Theorem}

\newtheorem{proposition}{Proposition}
\theoremstyle{definition}
\newtheorem{definition}{Definition}

\theoremstyle{remark}

\DeclareMathOperator*{\argmin}{arg\,min}

\title{Distributed Learning of Quantum State \\ Tomography Robust to Readout Errors}

\author{%
Amirhossein Taherpour~\orcidlink{0000-0003-4647-102X},%
\ Alireza Sadeghi~\orcidlink{0000-0003-1280-7592},%
\ and Georgios B.~Giannakis~\orcidlink{0000-0002-0196-0260},~\IEEEmembership{Fellow,~IEEE}%
\thanks{Amirhossein Taherpour, Alireza Sadeghi, and Georgios B.~Giannakis are with the Dept. of ECE, Univ. of Minnesota, Minneapolis, MN 55455, USA (e-mails: taher054@umn.edu; sadeg012@umn.edu; georgios@umn.edu).\\ \textit{Corresponding author: G. B.~Giannakis.}}%
}
\begin{document}

\maketitle

\begin{abstract}
Scalable estimation of quantum states with readout errors is a central challenge in large multiqubit systems. Existing overlapping-tomography methods improve scalability by working with local subsystems, but they usually assume known or separately calibrated measurements. At the same time, readout-estimation methods model measurement errors without enforcing consistency among overlapping regional states. In this context, the present paper introduces a unified framework for joint regional quantum state tomography with readout errors. A multiqubit system is partitioned in overlapping regions, each region is assigned to a local density operator and a local confusion matrix, and neighboring regions are coupled through reduced-state consistency on shared subsystems. This leads to a structured bilinear optimization problem. To solve it, a distributed alternating method is developed in which the state-update step is handled by the alternating direction method of multipliers (ADMM), while the confusion-matrix updates are carried out locally in parallel. Analytical guarantees are also established, including a sufficient condition for local identifiability, local quadratic growth of the population misfit, and convergence of the inner state-update procedure. Simulations on \textsc{Ring}, \textsc{Ladder}, \textsc{Torus}, and \textsc{Hub} graph geometries show that joint estimation improves state recovery over fixed-readout reconstruction, recovers a substantial portion of oracle performance, and reveals a clear tradeoff between state estimation performance, communication, and computation.
\end{abstract}

\begin{IEEEkeywords}
Quantum state tomography, readout error estimation, overlapping tomography, regional quantum estimation, confusion matrix, distributed optimization, alternating direction method of multipliers.
\end{IEEEkeywords}


\section{Introduction}

Accurate characterization of quantum states is central to quantum information processing, both for validating devices  quantum systems and for assessing control, measurement, and computation protocols \cite{NielsenChuang,Watrous}. When the goal is full-state reconstruction, quantum state tomography provides the natural framework, but its statistical and computational cost grows quickly with system size \cite{ParisRehacek2004QuantumStateEstimation,Renes2004SICPOVM}. This has motivated substantial work on sample-efficient tomography, local reconstruction, and overlapping strategies that avoid monolithic global recovery \cite{Cotler2020,Huang2020ClassicalShadows,Araujo2022LocalQuantumOverlappingTomography,Hu2024,Hansenne2025OptimalOverlappingTomography}.

From a signal processing (SP) viewpoint, tomography is a large-scale inverse problem with physical constraints and rapidly growing dimension. This viewpoint appears in recent work on detector tomography, efficient state reconstruction, and SP formulations of quantum estimation \cite{Wang2021TwoStageDetectorTomography,ShanmugamKalyani2023UnrollingSVT,Xue2025GeneralizedQST}. It is also connected to broader developments in low-rank positive-semidefinite recovery, nonsmooth low-rank optimization, and inverse problems with denoising- or learning-based priors \cite{LiSunChi2017LowRankPSDRecovery,TongMaChi2021ScaledSubgradient,VenkatakrishnanBoumanWohlberg2013PnP,ReehorstSchniter2019RED,XueZhengDai2022PhaseRetrievalPriors}. Related SP efforts have also emphasized distributed ADMM-type methods for stitching together local information in structured inverse problems \cite{VourasMishraArtusioGlimpse2023Overview,MurtadaHuRamaRaoSchroeder2023DistributedRadarImaging,kekatos2012distributed}.

In practical quantum systems, the difficulty is not only the size of the state space. Measurement outcomes are also corrupted by readout error, so the observed distribution can differ substantially from the ideal Born distribution. This has led to several approaches to readout-error mitigation, assignment or confusion matrices, detector tomography, unfolding methods, and structured models for correlated readout error \cite{Maciejewski2020,Nachman2020UnfoldingReadoutNoise,Bravyi2021MitigatingMeasurementErrors,Nation2021,Aasen2024,Aasen2025}. At the same time, most of existing works either assume separate calibration data or treats the readout model as known.

A related line of work instead estimates the state and measurement model jointly. This includes detector tomography, self-consistent measurement tomography, gate-set tomography, and simultaneous state--measurement estimation \cite{Wang2021TwoStageDetectorTomography,Nielsen2021GST,Cattaneo2023SelfConsistentMeasurementTomography,Jayakumar2024SimultaneousTomography}. These formulations capture the bilinear coupling between the state and the readout process, but they remain largely global and thus inherit the same scalability limits at large system size.

A different response to the scaling problem is regional or overlapping tomography, where one reconstructs reduced states on smaller subsystems and enforces consistency across overlaps \cite{Cotler2020,Araujo2022LocalQuantumOverlappingTomography,Hu2024,Hansenne2025OptimalOverlappingTomography}. This shifts the exponential dependence from the full system size to the region and overlap sizes. However, this literature usually assumes known or separately calibrated measurements, while the readout literature generally does not enforce consistency (consensus) across overlapping regional states. This leaves a gap between scalable regional tomography and joint state--readout estimation.

In this paper, we develop a framework for \emph{joint regional quantum state and readout estimation}. We partition a multiqubit system into overlapping regions, assign each region a local density operator and a local confusion matrix, and couple neighboring regions through reduced-state consistency (consensus) on shared subsystems. This boils down to a structured bilinear optimization problem that combines regional tomography, readout estimation, and distributed optimization.

\noindent \emph{Contributions.}
Our main contributions are as follows.
\begin{enumerate}[leftmargin=*,label=(\roman*)]
    \item We formulate a multi-region joint estimation problem in which the unknowns are the regional density operators and the regional confusion matrices. The formulation enforces physicality of each regional state, column-stochasticity of each regional confusion matrix, and consensus of reduced states across overlapping regions.

    \item We establish theoretical guarantees for the proposed model and algorithm. In particular, we derive a local identifiability condition for the joint regional formulation, prove local quadratic growth of the population misfit, and establish convergence of the inner state-update step based on overlap-consensus ADMM.

    \item We develop a distributed proximal alternating ADMM method and study its behavior across several graph topologies. The analysis and simulations show how the exponential dependence shifts from the full system size to the region and overlap sizes, when joint state and readout estimation improves performance, and how the regional topology affects the tradeoff between performance, communication, and computation.
\end{enumerate}

\noindent{{\it Notation.}}
Unless otherwise stated, (bold) lowercase letters denote (vectors) scalars, and uppercase bold letters are matrices. The only exception is $\boldsymbol{\rho}$, which is reserved for the quantum state matrix. Calligraphic uppercase letters are for sets, index sets, feasible sets, and Hilbert spaces; the \(i\)-th entry of vector $\mathbf{x}$ is represented as 
\([\mathbf{x}]_i\), and for a matrix \(\mathbf{X}\), its \((i,j)\)-th entry is denoted by \([\mathbf{X}]_{ij}\); matrices 
\(\mathbf{X}^{\top}\) and \(\mathbf{X}^{\dagger}\) represent transpose and conjugate transpose, respectively; \(\operatorname{Tr}(\mathbf{X})\) is the trace; and \(\mathbf{X}\succeq {\bf 0}\) means that \(\mathbf{X}\) is Hermitian positive semidefinite. The symbols \(\|\cdot\|_2\) and \(\|\cdot\|_F\) denote the Euclidean and Frobenius norms, respectively; \(\langle \mathbf{X},\mathbf{Y}\rangle := \operatorname{Tr}(\mathbf{X}^{\dagger}\mathbf{Y})\) is the Frobenius inner product, \(\mathbb{E}[\cdot]\) stands for expectation, and \(\mathbb{I}\{\cdot\}\) is the indicator function, while \(\otimes\) indicates the tensor product, and finally \(\cong\) denotes the isomorphism. 

The rest of the paper is organized as follows.
Section~\ref{sec:basics_background} reviews the basics of quantum and measurement models. Section~\ref{sec:multi_region} presents the multi-region formulation and the distributed (D-) ADMM algorithm. Section~\ref{sec:theory} develops the main theoretical results. Section~\ref{sec:simulations} reports the simulation tests. Finally, Section~\ref{sec:conclusion} concludes the paper.

\section{Preliminaries}
\label{sec:basics_background}

A qubit is the basic unit of quantum information. Unlike a classical bit, which takes only $0$  or $1$ values, a qubit is a unit vector in a two dimensional complex Hilbert space \(
\mathcal{H} :=\mathbb{C}^2
\). The two computational basis vectors of this space are
\[
\ket{0}:=\begin{bmatrix}1\\0\end{bmatrix},
\qquad
\ket{1}:=\begin{bmatrix}0\\1\end{bmatrix}.
\]
Any single-qubit \emph{pure} state $\bm{\psi}$ can be expressed as a superposition of the basis states, which in Dirac notation is written as 
\[
\ket{\bm{\psi}}=\alpha_0\ket{0}+\alpha_1\ket{1},
\;
\alpha_0,\alpha_1\in\mathbb{C}, 
\textrm{~s.t.~}
|\alpha_0|^2+|\alpha_1|^2=1  
\]
where  \(\ket{\bm{\psi}}\) is called \emph{ket} of $\bm{\psi}$, and denotes a column vector in \(\mathcal{H}\), and the corresponding \emph{bra} is its conjugate transpose,
\(
\bra{\bm{\psi}}:=\ket{\bm{\psi}}^\dagger.
\)
For a system of \(q\) qubits, define the Hilbert space
\[
\mathcal{H}_q:=(\mathcal{H})^{\otimes q}\cong\mathbb{C}^{Q},
\textrm{~~where~~}
Q := 2^q.
\] For each binary string
\(
\mathbf{x} :=[x_1,\ldots,x_q]^\top\in\mathcal{X},
\)
where 
\[
\mathcal{X} := \{0,1\}^q 
= \left\{ \mathbf{x} : x_i \in \{0,1\},\; i=1,\dots,q \right\},
\]
the associated computational-basis ket is  
\[
\ket{\mathbf{x}}
:=
\ket{x_1}\otimes\cdots\otimes\ket{x_q}.
\]
Accordingly, the computational basis of \(\mathcal{H}_q\) is
\(
\{\ket{\mathbf{x}}:\mathbf{x}\in\mathcal{X}\},
\)
and any \emph{pure} \(q\)-qubit state admits the expansion
\[
\ket{\bm{\psi}}=\sum_{\mathbf{x}\in\mathcal{X}}\alpha_{\mathbf{x}}\ket{\mathbf{x}},
\alpha_{\mathbf{x}}\in\mathbb{C}, 
\textrm{~~s.t.~}
\sum_{\mathbf{x}\in\mathcal{X}}|\alpha_{\mathbf{x}}|^2=1,
\]
where
\(
\|\ket{\bm{\psi}}\|_2=1
\)
accounts for the unit energy of the state. By contrast, a \emph{mixed} state cannot in general be represented by a single state vector. Instead, it describes a probabilistic ensemble of pure states and is represented by a density operator
\begin{equation}
\label{eq:q_state}
\bm{\rho} \, = \, \sum_j |\alpha_j|^2 \ket{\bm{\psi}_j}\bra{\bm{\psi}_j}
\end{equation}
where
\(
\alpha_j\in\mathbb{C},
\)
\(
\sum_j |\alpha_j|^2=1,
\)
and
\(
\ket{\bm{\psi}_j}\textrm{'s}\in\mathcal{H}_q
\)
are normalized. The density operator $\bm{\rho} \in \mathbb{C}^{q \times q}$ is a complex matrix, sometimes called the mixed state (or quantum state). Using the computational basis, \eqref{eq:q_state} can be rewritten as
\begin{equation}
\bm{\rho} 
\, 
= \!\!\!
\sum_{\mathbf{x},\mathbf{x}'\in\mathcal{X}}
\rho_{\mathbf{x},\mathbf{x}'}\,\ket{\mathbf{x}}\bra{\mathbf{x}'},
\textrm{~where~~}
\rho_{\mathbf{x},\mathbf{x}'}
:=
\bra{\mathbf{x}}\bm{\rho}\ket{\mathbf{x}'}.
\nonumber
\end{equation}
If
\(
\ket{\bm{\psi}_j}
=
\sum_{\boldsymbol{\xi}\in\mathcal{X}}
\alpha_{j,\boldsymbol{\xi}}\ket{\boldsymbol{\xi}},
\) using
\(
\bra{\mathbf{x}}\ket{\boldsymbol{\xi}}=\delta_{\mathbf{x},\boldsymbol{\xi}},
\)
one obtains
\begin{equation}
\label{eq:density_entries_closed_form}
[\bm{\rho} ]_{\mathbf{x}, \mathbf{x}'} := \rho_{\mathbf{x},\mathbf{x}'}
= \sum_j |\alpha_j|^2\,\alpha_{j,\mathbf{x}}\,{\alpha_{j,\mathbf{x}'}}^\dagger.
\end{equation}
\begin{remark}
The diagonal entry \(\rho_{\mathbf{x},\mathbf{x}}\) gives the population of basis state \(\ket{\mathbf{x}}\); that is, the probability of observing \(\mathbf{x}\) in a computational-basis measurement, whereas an off-diagonal entry \(\rho_{\mathbf{x},\mathbf{x}'}\) with \(\mathbf{x}\neq\mathbf{x}'\) represents the \emph{coherence} between states \(\ket{\mathbf{x}}\) and \(\ket{\mathbf{x}'}\).    
\end{remark}
To summarize, if
\(
\{\mathbf{x}_1,\dots,\mathbf{x}_{Q}\}
\)
is a fixed ordering of \(\mathcal{X}\), then \(\bm{\rho}\) has matrix representation
\[
\bm{\rho}
=
\begin{bmatrix}
\rho_{\mathbf{x}_1,\mathbf{x}_1} & \cdots & \rho_{\mathbf{x}_1,\mathbf{x}_{Q}}\\
\vdots & \ddots & \vdots\\
\rho_{\mathbf{x}_{Q},\mathbf{x}_1} & \cdots & \rho_{\mathbf{x}_{Q},\mathbf{x}_{Q}}
\end{bmatrix} \in \mathbb{C}^{Q \times Q},
\]
where \(\rho_{\mathbf{x},\mathbf{x}'}\) is the entry in the row associated with state \(\ket{\mathbf{x}}\) and the column linked to the state \(\ket{\mathbf{x}'}\). Pure states are precisely those density operators of the form
\(
\bm{\rho}=\ket{\bm{\psi}}\bra{\bm{\psi}}
\)\cite{NielsenChuang,Watrous}.

In finite dimensions, the density operator \(\bm{\rho}\) completely characterizes the state of a quantum system. Hence, every state-dependent quantity is determined by \(\bm{\rho}\).  In general, the state of a quantum system at any time \(t\) is represented by a density operator
\(
\bm{\rho}(t)
\)
acting on
\(
\mathcal{H}_q
\).
For a closed system, the evolution from an initial time \(t_0\) to \(t\) can be described using a unitary operator, i.e., there exists
\(
\mathbf{U}(t,t_0)\in\mathbb{C}^{Q\times Q}
\)
satisfying
\(
\mathbf{U}^\dagger(t,t_0)\mathbf{U}(t,t_0)=\mathbf{I},
\)
where
\begin{equation}
\label{eq:density_evolution}
\bm{\rho}(t)=\mathbf{U}(t,t_0)\,\bm{\rho}(t_0)\,\mathbf{U}^\dagger(t,t_0).
\end{equation}
The propagator
\(
\mathbf{U}(t,t_0)
\)
is uniquely determined by the system Hamiltonian
\(
\mathbf{H}(t)\in\mathbb{C}^{Q\times Q},
\)
satisfying
\(
\mathbf{H}(t)=\mathbf{H}^\dagger(t).
\)
For time-independent Hamiltonians \(
\mathbf{H}(t)=\mathbf{H},
\) it has closed form
\begin{equation}
\label{eq:unitary_time_independent}
\mathbf{U}(t,t_0)=\exp\!\left(-\frac{i}{\hbar}(t-t_0)\mathbf{H}\right)
\end{equation}
where \(i :=\sqrt{-1}\), and \(\hbar\) is the reduced Planck constant. Thus, once the initial state
\(
\bm{\rho}(t_0)
\)
and the Hamiltonian
\(
\mathbf{H}
\) are specified, the state
is uniquely determined for every later time \(t > t_0\)  \cite{NielsenChuang}.

In finite dimensions, \(\bm{\rho}\) fully characterizes the quantum state. Hence every state-dependent quantity is determined by \(\bm{\rho}\). To make this explicit, let
\(
\{\mathbf{G}_k\}_{k=1}^{Q^2}
\)
be a Hermitian operator basis on \(\mathcal{H}_q\) that is orthonormal with respect to the Hilbert--Schmidt inner product. Then every density operator admits the expansion
\begin{equation}
\label{eq:rho_operator_expansion}
\bm{\rho}=\sum_{k=1}^{Q^2}\tau_k(\bm{\rho})\,\mathbf{G}_k,
\qquad
\tau_k(\bm{\rho}):=\operatorname{Tr}(\mathbf{G}_k\bm{\rho})
\end{equation}
confirming that 
\(
\{\tau_k(\bm{\rho})\}_{k=1}^{Q^2}
\)
determine \(\bm{\rho}\) uniquely. Any state-dependent quantity can thus be represented as 
\begin{equation}
\label{eq:state_quantity_coordinates}
f(\bm{\rho})=\Phi\bigl(\tau_1(\bm{\rho}),\dots,\tau_{Q^2}(\bm{\rho})\bigr)
\end{equation}
for a suitable function \(\Phi\). For example, if
\(
\mathbf{O}=\sum_{k=1}^{Q^2} o_k\mathbf{G}_k
\)
is a fixed observable, then its expectation is
\(
\langle \mathbf{O}\rangle_{\bm{\rho}}=\operatorname{Tr}(\mathbf{O}\bm{\rho})=\sum_{k=1}^{Q^2}o_k\,\tau_k(\bm{\rho}),
\)
so
\(
\Phi_{\mathbf{O}}(z_1,\dots,z_{Q^2})=\sum_{k=1}^{Q^2}o_k z_k;
\)
likewise, the purity is
\(
\operatorname{Tr}(\bm{\rho}^2)=\sum_{k=1}^{Q^2}\tau_k(\bm{\rho})^2,
\)
so
\(
\Phi_{\mathrm{pur}}(z_1,\dots,z_{Q^2})=\sum_{k=1}^{Q^2} z_k^2.
\)
More generally, measurement outcome probabilities, computational-basis populations, von Neumann entropy, and fidelity with a fixed target state, can all be expressed through the coordinates
\(
\{\tau_k(\bm{\rho})\}_{k=1}^{Q^2}
\)\cite{Watrous}.

\subsection*{Quantum State Estimation with Readout Errors}
\label{sec:state_est}
The density operator \(\bm{\rho}\) is not directly observable. Rather, information about the state is obtained through measurement outcomes. A quantum measurement with \(M\) outcomes is described by a positive operator-valued measure (POVM)
\(
\mathcal{E}=\{\mathbf{E}_m\}_{m=1}^M
\)
on \(\mathcal{H}_q\), where each effect
\(
\mathbf{E}_m\in\mathbb{C}^{Q\times Q}
\)
is an operator in the Hilbrt space, which satisfies
\begin{equation}
\label{eq:povm_effects}
\mathbf{E}_m \succeq \mathbf{0}, \qquad \sum_{m=1}^M \mathbf{E}_m = \mathbf{I}.
\end{equation}
If the system is in state \(\bm{\rho}\), then the probability of observing outcome \(m \in \{ 1, \ldots, M\}\) is given by the Born rule,
\begin{equation}
\label{eq:povm_born_rule}
\pi_m(\bm{\rho})=\operatorname{Tr}(\mathbf{E}_m\bm{\rho}), \qquad m=1,\dots,M.
\end{equation}
Thus, the state must be inferred from measurement statistics. Thus, at a given measurement trial $t$, a categorical outcome $y_t \in \{1,\dots,M\}$ is observed with 
\[
y_t \sim \mathrm{Categorical}(\boldsymbol{\pi}(\bm{\rho}))
\]
where  
\[
\bm{\pi}(\bm{\rho}) := [\pi_1(\bm{\rho}), \ldots, \pi_M(\bm{\rho})]^\top.
\]
For notational brevity, consider the one-hot vector representation for measurements
$\mathbf{y}_t \in \{0,1\}^M$, defined as
\[
y_{t,m} = \mathbb{I}\{y_t = m\}, \quad \sum_{m=1}^M y_{t,m} = 1.
\]
Under this representation, for observation $\mathbf{y}_t$ it holds that  
\[\mathbb{E}[\mathbf{y}_t] = \boldsymbol{\pi}(\bm{\rho}), \qquad t = 1, \ldots T. \] 
Given \(T\) trials (or \emph{shots}) of stochastic measurements $\{\mathbf{y}_t\}_{t=1}^T$, the objective is to estimate the quantum the state $\bm{\rho}$. To this end, consider the empirical $m$-th outcome distribution
\begin{equation}
\label{eq:empirical_frequencies}
\hat{\pi}_m:=\frac{1}{T} \sum_{t=1}^T \mathbb{I}(y_{t,m} = 1), \qquad m=1,\dots,M.
\end{equation}
and let \(
\hat{\bm{\pi}} := [\hat{\pi}_1  \dots \hat{\pi}_M]^\top
\) be the empirical outcome distribution. A quantum state can then be estimated as 
\begin{equation}
\label{eq:state_estimation_problem}
\hat{\bm{\rho}}\in
\underset{\bm{\rho} \in \mathcal{D}(\mathcal{H}_q)}{\argmin}\;
D\bigl(\hat{\bm{\pi}},\bm{\pi}(\bm{\rho})\bigr).
\end{equation}
where $\mathcal{D}(\mathcal{H}_q)$ represents the feasible set for physical state space of \(q\)-qubit systems  
\begin{equation}
\label{eq:physical_state_space}
\mathcal{D}(\mathcal{H}_q):=
\left\{
\bm{\rho}\in\mathbb{C}^{Q\times Q}:
\bm{\rho}\succeq \mathbf{0},\ 
\operatorname{Tr}(\bm{\rho})=1
\right\},
\end{equation}
and \(D\) in \eqref{eq:state_estimation_problem}  is an appropriate discrepancy measure; e.g., least-squares error or KL-divergence, among others. As will be clarified in~Sec.~\ref{sec:multi_region}, our objective here is full-state estimation over the entire mixed-state model \(\mathcal{D}(\mathcal{H}_q)\), rather than estimation of a restricted class of observables or states. 

In general, the map
\(
\bm{\rho}\mapsto\bm{\pi}(\bm{\rho})
\)
need not be injective on \(\mathcal{D}(\mathcal{H}_q)\), so distinct states may produce the same measurement statistics, yielding an under-determined problem in \eqref{eq:state_estimation_problem}. To guarantee injectivity, the POVM \(\mathcal{E}\) must be informationally complete, meaning
\begin{equation}
\label{eq:informational_completeness_span}
\operatorname{span}(\mathcal{E})
=
\mathrm{Herm}(\mathcal{H}_q)
\end{equation}
where \(\mathrm{Herm}(\mathcal{H}_q)\) denotes the real vector space of Hermitian operators on \(\mathcal{H}_q\). Since
\(
\dim\!\bigl(\mathrm{Herm}(\mathcal{H}_q)\bigr)=Q^2,
\)
this requires
\(
M\ge Q^2.
\)
Under \eqref{eq:informational_completeness_span}, the map
\(
\bm{\rho}\mapsto\bm{\pi}(\bm{\rho})
\)
is injective on \(\mathcal{D}(\mathcal{H}_q)\), so the measurement statistics determine the underlying state uniquely in principle \cite{NielsenChuang,Watrous}.

The estimation problem becomes more delicate in the presence of readout errors. Let
\(
y^\star\in\{1,\dots,M\}
\)
and
\(
y\in\{1,\dots,M\}
\)
denote the ideal and recorded measurement outcomes, respectively. We model the readout process by a confusion matrix
\(
\mathbf{C}\in\mathbb{R}^{M\times M},
\)
defined by
\begin{equation}
\label{eq:confusion_matrix}
[\mathbf{C}]_{m,m'}:=\Pr(y=m\mid y^\star=m'),
\qquad
m,m'=1,\dots,M.
\end{equation}
Accordingly,
\begin{equation}
\label{eq:confusion_matrix_constraints}
0\le [\mathbf{C}]_{m,m'}\le 1,
\quad
\sum_{m=1}^{M} [\mathbf{C}]_{m,m'} =1,
\quad
m'=1,\dots,M,
\end{equation}
so each column of \(\mathbf{C}\) is a conditional distribution over recorded outcomes given the corresponding ideal outcome. If
\(
\bm{\pi}(\bm{\rho})
\)
denotes the ideal Born distribution associated with \(\bm{\rho}\), then the corresponding noisy outcome distribution is
\begin{equation}
\label{eq:noisy_distribution}
\bm{\pi}_{\mathbf{C}}(\bm{\rho})=\mathbf{C}\,\bm{\pi}(\bm{\rho}),
\end{equation}
that is,
\begin{equation}
\label{eq:noisy_distribution_components}
[\bm{\pi}_{\mathbf{C}}(\bm{\rho})]_m
=
\sum_{m'=1}^{M} [\mathbf{C}]_{m,m'} \,\pi_{m'}(\bm{\rho}),
\qquad
m=1,\dots,M.
\end{equation}
The confusion-matrix model captures a broad and practically relevant class of assignment errors, including both independent and correlated misclassification among measurement outcomes \cite{Maciejewski2020,Nation2021,Aasen2024}.

Incorporating measurement errors via $\mathbf{C}$, leads to the joint estimation problem over
\(
\mathcal{D}(\mathcal{H}_q)\times\mathcal{C}
\) 
\begin{equation}
\label{eq:noisy_state_estimation_problem}
(\hat{\bm{\rho}},\hat{\mathbf{C}})
\in
\underset{\bm{\rho},\,\mathbf{C}}{\argmin}\;
D \bigl(\hat{\bm{\pi}},\bm{\pi}_{\mathbf{C}}(\bm{\rho})\bigr),
\end{equation}
with the feasible set of confusion matrices defined as
\begin{equation}
\label{eq:admissible_confusion_matrices}
\mathcal{C}
:=
\left\{
\mathbf{C}\in\mathbb{R}^{M\times M}:
[\mathbf{C}]_{m m'}\ge 0,\ 
\mathbf{1}^\top \mathbf{C}=\mathbf{1}^\top
\right\}
\end{equation} 
possibly with extra structural or calibration constraints.

A more fundamental limitation of \eqref{eq:noisy_state_estimation_problem} is its lack of scalability as \(q\) increases. Under informational completeness, one has \(M\ge Q^2\), so the outcome space itself grows exponentially in the number of qubits. Consequently,
\(
\hat{\bm{\pi}}\in\mathbb{R}^M
\),
\(
\bm{\pi}_{\mathbf{C}}(\bm{\rho})\in\mathbb{R}^M
\),
and
\(
\mathbf{C}\in\mathbb{R}^{M\times M}
\)
are all exponentially large objects. In particular, under the column-stochasticity constraints, \(\mathbf{C}\) contains \(M(M-1)\) free parameters. The statistical burden is comparably severe. Since \(\hat{\bm{\pi}}\) is the empirical distribution of an \(M\)-outcome experiment based on \(T\) shots, standard concentration bounds imply that maintaining a fixed estimation accuracy as \(M\) grows requires \(T\) to increase at least linearly with \(M\), and typically on the order of \(M/\varepsilon^2\) to achieve accuracy \(\varepsilon\) in \(\ell_1\)-type metrics \cite{Weissman2003}. Therefore, the global formulation \eqref{eq:noisy_state_estimation_problem} quickly becomes impractical at large \(q\), which motivates the development of formulations that exploit additional structure.

\section{Multi-Region Formulation}
\label{sec:multi_region}

The global formulation \eqref{eq:noisy_state_estimation_problem} becomes computationally and statistically prohibitive as the number of qubits $q$ increases. This motivates  decomposing the full system into smaller, possibly overlapping regions, and working with regional objects rather than a single global pair \((\bm{\rho},\mathbf{C})\); as illustrated on top of Fig.~\ref{fig:finalq}. Related ideas appear in quantum overlapping tomography and its recent extensions, as well as in compatibility-constrained reconstruction from overlapping reduced states \cite{Cotler2020,Yang2023,Hu2024,Wei2025,Wang2025}. This perspective is adapted here to joint state estimation with readout errors, where regional density operators and regional confusion matrices are coupled through consistency constraints on shared subsystems. 

\begin{figure}[t]
    \centering
    \includegraphics[width=\columnwidth]{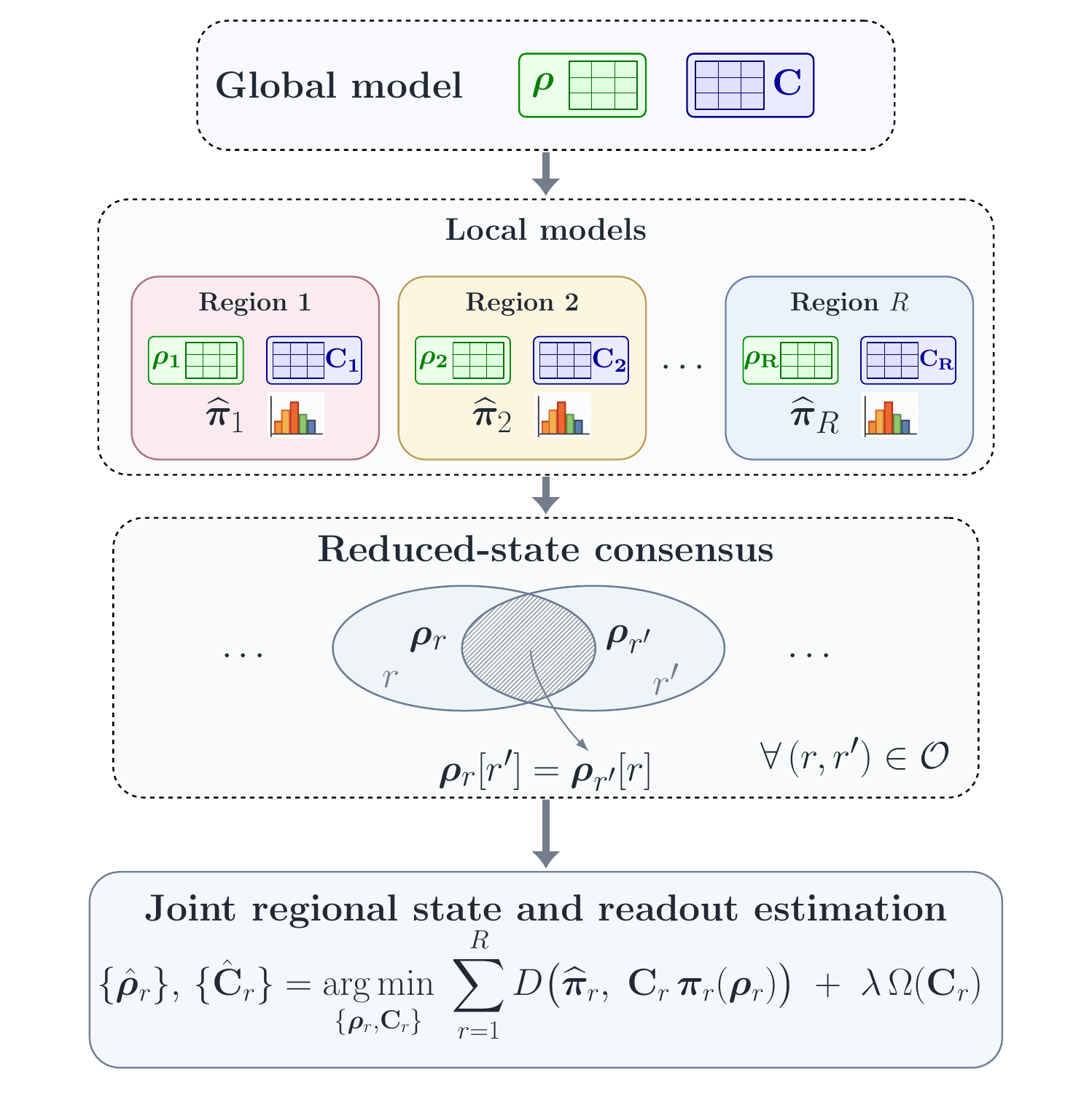}
    \caption{A global quantum system partitioned into $R$ local regions. Reduced-state consensus among overlapping regions is reached by partial trace over regional states, offering distributed joint regional state and readout estimation.}
    \label{fig:finalq}
\end{figure}

\subsection{Global-to-Local Decomposition with Overlap Consensus}

Let \(\bm{\rho}\in\mathcal{D}(\mathcal{H}_q)\) denote the global \(q\)-qubit state introduced in Sec.~\ref{sec:basics_background}. To obtain a structured alternative to the global problem \eqref{eq:noisy_state_estimation_problem}, suppose that the full system is distributed over \(s=1,\dots,S\) sites, where site \(s\) contains \(q_s\) qubits, so that
\(
q=\sum_{s=1}^S q_s
\).
Next, introduce \(R\) possibly overlapping regions, indexed by \(r=1,\dots,R\), and let
\(
\mathcal{R}_r\subseteq\{1,\dots,S\}
\)
denote the set of sites contained in region \(r\). The total number of qubits in region \(r\) is then
\(
q_r:=\sum_{s\in\mathcal{R}_r} q_s
\),
and the corresponding regional state is represented by a density operator
\(
\bm{\rho}_r\in\mathcal{D}(\mathcal{H}_{q_r})
\). For brevity, let
\(
\bm{\rho}_r\in\mathcal{D}_r
\)
denote the regional density operator, where \(\mathcal{D}_r\) is the regional counterpart of \eqref{eq:physical_state_space}. Let
\(
\mathcal{E}_r := \{\mathbf{E}_{r,m}\}_{m=1}^{M_r}
\)
be a POVM on \(\mathcal{H}_{q_r}\), and let
\(
\bm{\pi}_r(\bm{\rho}_r)
:=
[\pi_{r,1}(\bm{\rho}_r),\dots,\pi_{r,M_r}(\bm{\rho}_r)]^\top
\)
denote the associated ideal outcome distribution, where
\(
\pi_{r,m}(\bm{\rho}_r) := \operatorname{Tr}(\mathbf{E}_{r,m}\bm{\rho}_r)
\).
Readout errors are modeled by a regional confusion matrix
\(
\mathbf{C}_r\in\mathcal{C}_r
\subseteq
\mathbb{R}^{M_r\times M_r},
\)
where \(\mathcal{C}_r\) is the regional counterpart of \eqref{eq:admissible_confusion_matrices}. The corresponding noisy regional outcome distribution is
\(
\bm{\pi}_{\mathbf{C}_r}(\bm{\rho}_r)=\mathbf{C}_r\bm{\pi}_r(\bm{\rho}_r)
\).

With the POVM \(\mathcal{E}_r\) repeated \(T_r\) times, let the empirical regional distribution as
\(
\hat{\bm{\pi}}_r=[\hat{\pi}_{r,1},\dots,\hat{\pi}_{r,M_r}]^\top
\),
where the empirical probability of 
\(m\)-th recorded outcome is 
\begin{align}
    \nonumber \hat{\pi}_{r,m}:=\frac{1}{T_r} \sum_{t=1}^{T_r} \mathbb{I}([\mathbf{y}_{rt}]_m = 1), \qquad m=1,\dots,M.
\end{align}
Thus, for each region \(r\), the observed distribution \(\hat{\bm{\pi}}_r\) is modeled by the noisy prediction \(\bm{\pi}_{\mathbf{C}_r}(\bm{\rho}_r)\). For overlapping regions, the relevant compatibility condition is that they determine the same reduced state on their intersection. Let
\begin{equation}
\label{eq:overlap_pairs}
\mathcal{O}:=\{(r,r'):\mathcal{R}_r\cap\mathcal{R}_{r'}\neq\varnothing\}.
\end{equation}

\begin{definition}[Reduced state on an overlap {\rm\cite[p.~80]{ref:partial_trace}}]
\label{def:reduced_overlap_state}
For \( \forall (r,r')\in\mathcal{O}\), write
\[
\mathcal{H}_{q_r}\cong \mathcal{H}_{r\cap r'}\otimes \mathcal{H}_{r\setminus r'}
\]
where \(\mathcal{H}_{r\cap r'}\) and \(\mathcal{H}_{r\setminus r'}\) are the Hilbert spaces associated with \(\mathcal{R}_r\cap\mathcal{R}_{r'}\) and \(\mathcal{R}_r\setminus\mathcal{R}_{r'}\), respectively. For any \(\bm{\rho}_r\in\mathcal{D}(\mathcal{H}_{q_r})\), define its reduced state on the shared subsystem by the basis-independent partial trace
\begin{equation}
\label{eq:overlap_reduced_state}
\bm{\rho}_r[r'] 
:=\operatorname{Tr}_{\mathcal{R}_r\setminus\mathcal{R}_{r'}}(\bm{\rho}_r)
=\!\! \sum_{j=1}^{N_{r\setminus r'}} \!
\bigl(\mathbf{I}_{r\cap r'}\otimes \langle \bm{\phi}_j|\bigr)\bm{\rho}_r
\bigl(\mathbf{I}_{r\cap r'}\otimes |\bm{\phi}_j\rangle\bigr)
\end{equation}
where \(\{|\bm{\phi}_j\rangle\}_{j=1}^{N_{r\setminus r'}}\) is any orthonormal basis of \(\mathcal{H}_{r\setminus r'}\); \(I_{r\cap r'}\) is the identity on \(\mathcal{H}_{r\cap r'}\); and \(N_{r\setminus r'}:=\dim(\mathcal{H}_{r\setminus r'})\).
\end{definition}
Since \(\mathcal{R}_r\setminus\mathcal{R}_{r'}=\mathcal{R}_r\setminus(\mathcal{R}_r\cap\mathcal{R}_{r'})\), the partial trace removes exactly the sites of \(r\) outside the overlap. Thus, overlapping regions \(r\) and \(r'\) satisfy a consensus constraint; that is,  
\begin{equation}
\label{eq:overlap_consistency}
\bm{\rho}_r [r'] = \bm{\rho}_{r'} [r], \qquad (r,r')\in\mathcal{O}
\end{equation}
with both sides density operators on the subsystem associated with \(\mathcal{R}_r\cap\mathcal{R}_{r'}\); see also  Fig.~\ref{fig:finalq} for the system model.

\subsection{Joint State and Confusion Matrix Estimation}
Throughout this subsection, whenever two region indices \(r\) and \(r'\) appear together, they are understood to refer to an overlapping pair in \(\mathcal{O}\) in \eqref{eq:overlap_pairs}.

The corresponding multi-region counterpart of \eqref{eq:noisy_state_estimation_problem} jointly estimates the regional states \(\{\bm{\rho}_r\}\) and regional confusion matrices \(\{\mathbf{C}_r\}\) by fitting the predicted noisy regional distributions to the empirical distributions, while enforcing physicality, column-stochasticity, and overlap consensus. This yields
\begin{subequations}
\label{eq:multi_region_estimation_problem}
\begin{align}
(\{\hat{\bm{\rho}}_r\},\{\hat{\mathbf{C}}_r\})
&\in
\notag\\
\MoveEqLeft[5.5]
\argmin_{\mathclap{\{\bm{\rho}_r\},\,\{\mathbf{C}_r\}}}\;
\sum_{r=1}^R
\Bigl[
D\bigl(\hat{\bm{\pi}}_r,\mathbf{C}_r\bm{\pi}_r(\bm{\rho}_r)\bigr)
+\lambda\,\Omega_r(\mathbf{C}_r)
\Bigr]
\label{eq:multi_region_estimation_problem_obj}
\\[0.4ex]
\mathrlap{\hspace{-0.9em}\text{s.t.}\quad}\phantom{\text{s.t.}\quad}
\bm{\rho}_r
&\in
\mathcal{D}_r,
\qquad  r = 1, \ldots, R,
\label{eq:multi_region_estimation_problem_state}
\\
\phantom{\text{s.t.}\quad}
\mathbf{C}_r
&\in
\mathcal{C}_r,
\qquad \; r= 1, \ldots, R,
\label{eq:multi_region_estimation_problem_confusion}
\\
\phantom{\text{s.t.}\quad}
\bm{\rho}_{r'}[r] &= \bm{\rho}_{r}[r']
\quad \;\; \forall \, (r, r') \in \mathcal{O}.
\label{eq:multi_region_estimation_problem_overlap}
\end{align}
\end{subequations}
Here \(\lambda\ge 0\) is a regularization parameter, and \(\Omega_r(\mathbf{C}_r)\) is a penalty on the regional confusion matrix. The objective in \eqref{eq:multi_region_estimation_problem_obj} fits the empirical regional distributions \(\hat{\bm{\pi}}_r\) to the predicted noisy distributions \(\mathbf{C}_r\bm{\pi}_r(\bm{\rho}_r)\), while \eqref{eq:multi_region_estimation_problem_state}, \eqref{eq:multi_region_estimation_problem_confusion}, and \eqref{eq:multi_region_estimation_problem_overlap} enforce the regional counterparts of the state constraint \eqref{eq:physical_state_space}, the confusion-matrix constraint \eqref{eq:admissible_confusion_matrices}, and the overlap-consensus constraint \eqref{eq:overlap_consistency}. 
The cost in \eqref{eq:multi_region_estimation_problem_obj}
includes the regularizer
\begin{equation}
\label{eq:regional_confusion_regularizer_general}
\Omega_r(\mathbf{C}_r)=\|\mathbf{C}_r-\bar{\mathbf{C}}_r\|_F^2
\end{equation}
where \(\bar{\mathbf{C}}_r\) is a prescribed reference confusion matrix; when no additional calibration information is available, a natural choice is \(\bar{\mathbf{C}}_r=\mathbf{I}_r\), corresponding to ideal readout.

Problem~\eqref{eq:multi_region_estimation_problem} is not jointly convex because \(\mathbf{C}_r\bm{\pi}_r(\bm{\rho}_r)\) is bilinear in \((\bm{\rho}_r,\mathbf{C}_r)\). For fixed \(\{\mathbf{C}_r\}\) however, it is convex in \(\{\bm{\rho}_r\}\) because \(\bm{\pi}_r(\bm{\rho}_r)\) and the partial-trace maps underlying \eqref{eq:overlap_consistency} are linear. Likewise, for fixed \(\{\bm{\rho}_r\}\), it is convex in \(\{\mathbf{C}_r\}\) provided \(D(\hat{\bm{\pi}}_r,\cdot)\) and \(\Omega_r\) are convex. Thus, \eqref{eq:multi_region_estimation_problem} has a blockwise convex structure, where the state block is coupled through the overlap constraints, whereas the confusion-matrix block is separable across regions.


\subsection{Distributed Solution via ADMM}

Problem~\eqref{eq:multi_region_estimation_problem} is separable across regions except for the overlap constraints \eqref{eq:multi_region_estimation_problem_overlap}. We therefore use a proximal alternating iterative scheme in which, over outer iterations $k = 1, \ldots K$ with a  fixed \(\{\mathbf{C}_r^k\}\), the state block is solved by ADMM across overlapping regions, and for fixed \(\{\bm{\rho}_r^{k+1}\}\), the confusion matrices are updated locally in parallel.

For each overlapping pair, introduce an auxiliary variable \(\bm{\rho}_{rr'}\) on the shared subsystem, with the convention
$
\bm{\rho}_{rr'}=\bm{\rho}_{r'r}.
$
Then, the overlap-consistency constraints can be written as
\begin{equation}
\label{eq:overlap_consensus_constraints}
\bm{\rho}_r[r'] = \bm{\rho}_{rr'}
\qquad
\bm{\rho}_{r'}[r] = \bm{\rho}_{r'r}.
\end{equation}
At outer iteration \(k+1\), with \(\{\mathbf{C}_r^k\}\) fixed and \(\bm{\rho}_r\in\mathcal{D}_r\), the proximal state subproblem for \(r = 1, \ldots R\) is
\begin{equation}
\label{eq:proximal_state_subproblem_overlap}
\begin{aligned}
(\{\bm{\rho}_r^{k+1}\},\{\bm{\rho}_{rr'}^{k+1}\})
&\in
\\
\MoveEqLeft[7.5]
\argmin_{\mathclap{\{\bm{\rho}_r\},\{\bm{\rho}_{rr'}\}}}\;
\sum_{r=1}^R
\left[
D\left(\hat{\bm{\pi}}_r,\mathbf{C}_r^k\bm{\pi}_r(\bm{\rho}_r)\right)
+
\frac{\gamma_\rho}{2}\|\bm{\rho}_r-\bm{\rho}_r^k\|_F^2
\right]
\\
\mathrlap{\hspace{-4.3em}\text{s.t.}\quad}\phantom{\text{s.t.}\quad}
&\bm{\rho}_r[r'] = \bm{\rho}_{rr'},
\quad
\bm{\rho}_{r'}[r] = \bm{\rho}_{r'r}.
\end{aligned}
\end{equation}
For fixed \(\{\mathbf{C}_r^k\}\), problem~\eqref{eq:proximal_state_subproblem_overlap} is convex whenever \(D(\hat{\bm{\pi}}_r,\cdot)\) is convex, since \(\bm{\pi}_r(\cdot)\) and the partial-trace maps $\bm{\rho}_r[r']$ and $\bm{\rho}_{r'}[r]$ are linear.

To simplify the notation in the ADMM step, define
\begin{equation}
\label{eq:state_loss_definition}
f_r^k(\bm{\rho}_r)
:=
D\left(\hat{\bm{\pi}}_r,\mathbf{C}_r^k\bm{\pi}_r(\bm{\rho}_r)\right)
+
\frac{\gamma_\rho}{2}\|\bm{\rho}_r-\bm{\rho}_r^k\|_F^2.
\end{equation}
Then, for fixed outer iterate \(k\), the augmented Lagrangian is
\begin{align}
&\mathcal{L}_\beta^k\!\left(
\{\bm{\rho}_r\},\{\bm{\rho}_{rr'}\};\{\bm{\Lambda}_{rr'}\}
\right) := \sum_{r=1}^R f_r^k(\bm{\rho}_r)
\label{eq:augmented_lagrangian_state_subproblem}
\\
\nonumber
& \quad 
+ \sum_{(r,r')\in\mathcal{O}}
\Big\{
\langle \bm{\Lambda}_{rr'},\bm{\rho}_r[r'] - \bm{\rho}_{rr'} \rangle
+\frac{\beta}{2}\|\bm{\rho}_{r}[r'] - \bm{\rho}_{rr'}\|_F^2
\\
\nonumber
& \qquad + \langle \bm{\Lambda}_{r'r},\bm{\rho}_{r'}[r] - \bm{\rho}_{r'r}\rangle
+\frac{\beta}{2}\|\bm{\rho}_{r'}[r] - \bm{\rho}_{r'r}\|_F^2
\Big\}.
\end{align}
For a fixed outer iteration \(k\), we suppress the superscript \(k\) throughout the inner ADMM iteration indexed by $l = 1, \ldots $. 
The corresponding inner ADMM iterations are
\begin{subequations}
\label{eq:admm_generic_updates}
\begin{align}
\{\bm{\rho}_r^{\ell+1}\}
:=& \,
\underset{\{\bm{\rho}_r\}}{\arg\min}\;
\mathcal{L}_\beta^k\left(
\{\bm{\rho}_r\},
\{\bm{\rho}_{rr'}^\ell\};
\{\bm{\Lambda}_{rr'}^\ell\}
\right),
\label{eq:admm_generic_state_update}
\\
\{\bm{\rho}_{rr'}^{\ell+1}\}
:=& \,
\underset{\{\bm{\rho}_{rr'}\}}{\arg\min}\;
\mathcal{L}_\beta^k\left(
\{\bm{\rho}_r^{\ell+1}\},
\{\bm{\rho}_{rr'}\};
\{\bm{\Lambda}_{rr'}^\ell\}
\right),
\label{eq:admm_generic_consensus_update}
\\
\bm{\Lambda}_{rr'}^{\ell+1} = & 
\mathrlap{\hspace{0.15em}{} {}\bm{\Lambda}_{rr'}^\ell+\beta\,(\bm{\rho}_{r}^{\ell+1}[r'] - \bm{\rho}_{rr'}^{\ell+1}),}
\phantom{{} = \,{}\bm{\Lambda}_{rr'}^\ell+\beta \,(\bm{\rho}_{r}^{\ell+1}[r'] - \bm{\rho}_{rr'}^{\ell+1}),}
\\
\bm{\Lambda}_{r'r}^{\ell+1}
=& \,
\bm{\Lambda}_{r'r}^\ell+\beta\,(\bm{\rho}_{r'}^{\ell+1}[r] - \bm{\rho}_{r'r}^{\ell+1}).
\label{eq:admm_generic_dual_update}
\end{align}
\end{subequations}
The next proposition gives the equivalent explicit regional updates used in the algorithm; see the Appendix for the proof. 

\begin{proposition}
\label{prop:explicit_admm_updates}
For a fixed outer iterate \(k\), the ADMM steps in \eqref{eq:admm_generic_updates} are equivalent to the following explicit updates. For each fixed region \(r\), the minimization is over \(\mathcal{D}_r\), and the sum over \(r'\) runs over all regions overlapping with \(r\) 
\begin{subequations}
\label{eq:admm_explicit_updates}
\begin{align}
&\bm{\rho}_r^{\ell+1}\in \argmin_{\bm{\rho}_r}\;
\bigg\{
f_r^k(\bm{\rho}_r) \, + \sum_{r'}
\Big[
\langle \bm{\Lambda}_{rr'}^\ell,\bm{\rho}_{r}[r'] -\bm{\rho}_{rr'}\rangle
\nonumber \\
& \hspace{+3.5 cm}
+\frac{\beta}{2}\|\bm{\rho}_{r}[r'] -\bm{\rho}_{rr'}\|_F^2
\big]
\bigg\},
\label{eq:admm_state_update}
\\
& \bm{\rho}_{rr'}^{\ell+1}
=
\frac{1}{2}
\left(
\bm{\rho}_{r}^{\ell+1}[r']
+
\bm{\rho}_{r'}^{\ell+1}[r]
+
\frac{\bm{\Lambda}_{rr'}^\ell+\bm{\Lambda}_{r'r}^\ell}{\beta}\right),
\label{eq:admm_consensus_update}
\\
& \bm{\Lambda}_{rr'}^{\ell+1}
=
\bm{\Lambda}_{rr'}^\ell
+
\beta
\left(
\bm{\rho}_{r}^{\ell+1}[r']
-
\bm{\rho}_{rr'}^{\ell+1}
\right),
\label{eq:admm_dual_update_r}
\\
& \bm{\Lambda}_{r'r}^{\ell+1}
=
\bm{\Lambda}_{r'r}^\ell
+
\beta
\left(
\bm{\rho}^{\ell+1}_{r'}[r]
-
\bm{\rho}_{r'r}^{\ell+1}
\right).
\label{eq:admm_dual_update_rprime}
\end{align}
\end{subequations}
\end{proposition}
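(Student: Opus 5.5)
The plan is to check the three blocks of \eqref{eq:admm_generic_updates} one at a time. Each time, I will use the fact that the augmented Lagrangian \eqref{eq:augmented_lagrangian_state_subproblem} is a sum of terms, and each term involves only a few of the variables.

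\emph{State update.} Fix $\{\bm{\rho}_{rr'}^\ell\}$ and $\{\bm{\Lambda}_{rr'}^\ell\}$. Then every term of $\mathcal{L}_\beta^k$ depends on at most one regional state. These terms are $f_r^k(\bm{\rho}_r)$, the linear-plus-quadratic penalty $\langle \bm{\Lambda}_{rr'},\bm{\rho}_r[r']-\bm{\rho}_{rr'}\rangle+\frac{\beta}{2}\|\bm{\rho}_r[r']-\bm{\rho}_{rr'}\|_F^2$, and its mirror term in $\bm{\rho}_{r'}$. I will regroup the double sum over $\mathcal{O}$ by region. For each $r$, I collect every penalty term whose first argument is $\bm{\rho}_r[r']$, over all $r'$ that overlap $r$. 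One bookkeeping point needs care here. If $\mathcal{O}$ contains both $(r,r')$ and $(r',r)$, each pair contributes twice. I will adopt the convention that the sum is over unordered pairs, or equivalently that each ordered constraint appears exactly once; with that convention, each $(r,r')$ constraint term appears exactly once in region $r$'s group. The feasible set $\prod_r\mathcal{D}_r$ is a Cartesian product, so the joint minimization splits into $R$ independent problems over $\mathcal{D}_r$, and these are exactly \eqref{eq:admm_state_update}. The terms $-\langle\bm{\Lambda},\bm{\rho}_{rr'}\rangle$ are constant in $\bm{\rho}_r$ and do not affect the argmin, so they may be kept or dropped.

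\emph{Consensus update.} Fix $\{\bm{\rho}_r^{\ell+1}\}$ and impose $\bm{\rho}_{rr'}=\bm{\rho}_{r'r}=:\mathbf{Z}$. The terms involving $\mathbf{Z}$ are then
\[
-\langle \bm{\Lambda}_{rr'}^\ell+\bm{\Lambda}_{r'r}^\ell,\mathbf{Z}\rangle+\tfrac{\beta}{2}\|\mathbf{A}-\mathbf{Z}\|_F^2+\tfrac{\beta}{2}\|\mathbf{B}-\mathbf{Z}\|_F^2,
\]
with $\mathbf{A}=\bm{\rho}_r^{\ell+1}[r']$ and $\mathbf{B}=\bm{\rho}_{r'}^{\ell+1}[r]$. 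This is a strongly convex quadratic in $\mathbf{Z}$ over the real vector space of Hermitian matrices. The multipliers can be taken Hermitian, because they start Hermitian and the updates keep them so, and then the inner product is real. Setting the gradient to zero gives $2\beta\mathbf{Z}=\beta(\mathbf{A}+\mathbf{B})+\bm{\Lambda}_{rr'}^\ell+\bm{\Lambda}_{r'r}^\ell$, which is \eqref{eq:admm_consensus_update}. The minimizer is unique, which is why the statement uses equality there. Different pairs share no consensus variables, so this update decouples across pairs.

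\emph{Dual updates.} These are identical to \eqref{eq:admm_generic_dual_update} by definition, so nothing needs proving.

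The only delicate step is the regrouping and counting in the state update. It has to be clear how ordered versus unordered pairs in $\mathcal{O}$ are counted, so that the per-region objective in \eqref{eq:admm_state_update} contains each constraint term exactly once. A secondary point is justifying the real-valued gradient in the Hermitian space for the consensus step.
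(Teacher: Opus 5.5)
Your proof is correct. The paper never actually proves this proposition: despite the pointer, its Appendix contains proofs only of the two theorems, Proposition~\ref{prop:quantitative_scaling_advantage}, and Remark~\ref{rem:kl_ml_map}. So there is no alternative route to compare against. Your derivation is exactly the justification the statement tacitly relies on: separability of $\mathcal{L}_\beta^k$ over the product set $\prod_r\mathcal{D}_r$ gives \eqref{eq:admm_state_update}, the first-order condition of a strongly convex quadratic in the shared variable gives \eqref{eq:admm_consensus_update}, and the dual steps hold by definition.

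Your counting caveat is the one point of substance, and it is not merely cosmetic. As literally defined in \eqref{eq:overlap_pairs}, $\mathcal{O}$ is a set of ordered pairs, and it even contains $(r,r)$. Under that reading, each unordered pair contributes its terms twice to \eqref{eq:augmented_lagrangian_state_subproblem}. This leaves \eqref{eq:admm_consensus_update} unchanged, since the objective in $\mathbf{Z}$ is only scaled by $2$. It does, however, turn \eqref{eq:admm_state_update} into the same update with $2\beta$ and $2\bm{\Lambda}_{rr'}^\ell$. So the unordered convention, with $r'\neq r$, is precisely what makes the proposition true as written.

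Similarly, you can drop the Hermitian-initialization assumption. To allow arbitrary, possibly non-Hermitian, initial multipliers (as Theorem~\ref{thm:proximal_state_admm_convergence} does), read $\langle\mathbf{X},\mathbf{Y}\rangle$ as $\operatorname{Re}\operatorname{Tr}(\mathbf{X}^{\dagger}\mathbf{Y})$ and let $\mathbf{Z}$ range over all complex matrices. The same stationarity condition then yields \eqref{eq:admm_consensus_update} verbatim.
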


The inner ADMM loop is run until convergence. Let \(\ell_\star\) denote the terminal inner iterate. We then set
\[
\bm{\rho}_r^{k+1}:=\bm{\rho}_r^{\ell_\star},
\qquad
\bm{\rho}_{rr'}^{k+1}:=\bm{\rho}_{rr'}^{\ell_\star},
\qquad
\bm{\Lambda}_{rr'}^{k+1}:=\bm{\Lambda}_{rr'}^{\ell_\star}.
\]
The only exchanged quantities are the reduced states on shared subsystems; no global density operator is formed.

With \(\{\bm{\rho}_r^{k+1}\}\) fixed, each region $r$ updates its confusion matrix over \(\mathcal{C}_r\) by solving
\begin{equation}
\label{eq:admm_confusion_update}
\begin{aligned}
\mathrlap{\hspace{-2.3em}\mathbf{C}_r^{k+1}\in}\phantom{\mathbf{C}_r^{k+1}\in}
&
\\
\MoveEqLeft[5.5]
\argmin_{\mathbf{C}_r}\;
\begin{alignedat}[t]{1}
[\,D  \left(\hat{\bm{\pi}}_r,\mathbf{C}_r\bm{\pi}_r(\bm{\rho}_r^{k+1})\right)
&{}+ \lambda\,\Omega_r(\mathbf{C}_r)
\\
&{}+ \frac{\gamma_C}{2}\|\mathbf{C}_r-\mathbf{C}_r^k\|_F^2\,].
\end{alignedat}
\end{aligned}
\end{equation}
These updates are separable across regions and require no inter-region communication. No global confusion matrix is formed.

\begin{algorithm}[t]
\caption{D-ADMM for Robust Quantum State Estimation}
\label{alg:distributed_paadmm}
\begin{algorithmic}[1]
\REQUIRE \(\{\mathbf{y}_{t,r}\}, \{\mathcal{E}_r\}\), and positive 
\(\beta\), \(\gamma_\rho\), \(\gamma_C\)
\RETURN{\(\{\bm{\rho}_r, \mathbf{C}_r\}_{r=1}^R\)}
\STATE{{Initialize:~}\(\{\bm{\rho}_r^0,\mathbf{C}_r^0\}\),
\(\{\bm{\rho}_{rr'}^0\}\),
\(\{\bm{\Lambda}_{rr'}^0\}\),
}
\FOR{$k=0,1,2,\dots$} 
    \STATE Initialize inner loop: \(\ell= 0, \bm{\rho}_{rr'}^{0}=\bm{\rho}_{rr'}^{k}, \bm{\Lambda}_{rr'}^{0}=\bm{\Lambda}_{rr'}^{k}\)
    \REPEAT
        \FOR{each region $r=1,\dots,R$ \textbf{in parallel}}
            \STATE Receive \(\{\bm{\rho}_{rr'}^\ell,\bm{\Lambda}_{rr'}^\ell\}_{r'}\), \( \forall r'~\textrm{~s.t.~} (r,r') \in \mathcal{O}\)
            \STATE Compute \(\bm{\rho}_r^{\ell+1}\) via \eqref{eq:admm_state_update}
            \STATE Send \(\{\bm{\rho}_{r}[r']^{\ell+1}\}_{r'}\) to \(\forall r', \textrm{\,~~s.t.~}(r,r') \in \mathcal{O}\).
        \ENDFOR
        \FORALL{overlapping pairs \((r,r')\) \textbf{in parallel}}
            \STATE Compute \(\bm{\rho}_{rr'}^{\ell+1}\) via \eqref{eq:admm_consensus_update}
            \STATE Update \; \(\bm{\Lambda}_{rr'}^{\ell+1}\) via \eqref{eq:admm_dual_update_r}
            \STATE Update \; \(\bm{\Lambda}_{r'r}^{\ell+1}\) via \eqref{eq:admm_dual_update_rprime}
        \ENDFOR
        \STATE \(\ell = \ell+1\)
    \UNTIL{The inner ADMM stopping criterion is satisfied}
    \STATE{Initialize outer loop: \newline \(\ell_\star = \ell, \bm{\rho}_r^{k+1} = \bm{\rho}_r^{\ell_\star}, \bm{\rho}_{rr'}^{k+1} = \bm{\rho}_{rr'}^{\ell_\star}, \textrm{~and~} \bm{\Lambda}_{rr'}^{k+1} =  \bm{\Lambda}_{rr'}^{\ell_\star}\)}
    \FOR{each region $r=1,\dots,R$ \textbf{in parallel}}
        \STATE Compute~~~ \(\mathbf{C}_r^{k+1}\) via \eqref{eq:admm_confusion_update}
    \ENDFOR
    \IF{The outer iteration stopping criterion is satisfied}
        \STATE \textbf{Return~}\(\{\bm{\rho}_r, \mathbf{C}_r\}_{r=1}^R\)
    \ENDIF
\ENDFOR
\end{algorithmic}
\end{algorithm}
\begin{remark}[KL-based discrepancy and maximum likelihood estimation]
\label{rem:kl_ml_map}
Suppose that in \eqref{eq:multi_region_estimation_problem_obj} the discrepancy measure is the Kullback-Leibler (KL) divergence; that is, \(D\bigl(\hat{\bm{\pi}}_r,\mathbf C_r\bm{\pi}_r(\bm{\rho}_r)\bigr)=\mathrm{KL} \bigl(\hat{\bm{\pi}}_r \,\|\, \mathbf C_r\bm{\pi}_r(\bm{\rho}_r)\bigr), \forall r\). Then \eqref{eq:multi_region_estimation_problem} (without the regularizer) is equivalent to a constrained maximum-likelihood estimator (MLE). With a properly defined regularizer as in \eqref{eq:multi_region_estimation_problem_obj}, the problem becomes a constrained penalized MLE; equivalently, it is a constrained maximum a posteriori (MAP) estimator under a regularizer satisfying log prior \(-\log p\bigl(\{\bm{\rho}_r\},\{\mathbf C_r\}\bigr)\); see also  Appendix~\ref{app:proof_kl_ml_map} for details.
\end{remark}


\section{Analytical Performance}
\label{sec:theory}
\subsection{Local identifiability and quadratic growth}

The local behavior of \eqref{eq:multi_region_estimation_problem} will be investigated here around a feasible interior reference pair
\(
(\{\bm{\rho}_r^\ast\},\{\mathbf{C}_r^\ast\})
\),
where
\(
\bm{\rho}_r^\ast\in\mathcal{D}_r
\)
and
\(
\mathbf{C}_r^\ast\in\mathcal{C}_r~\forall r\), with the interior conditions
\[
\bm{\rho}_r^\ast\succ0,
\qquad
[\mathbf{C}_r^\ast]_{mm'}>0,
\qquad
\forall r,\ \forall m,m'.
\]
These are the strict counterparts of the regional feasibility constraints in
\eqref{eq:multi_region_estimation_problem_state}--\eqref{eq:multi_region_estimation_problem_confusion}. For any feasible pair
\(
(\{\bm{\rho}_r\},\{\mathbf{C}_r\})
\),
define the distance of pair \((\{\bm{\rho}_r\},\{\mathbf{C}_r\})\) to \((\{\bm{\rho}_r^\ast\},\{\mathbf{C}_r^\ast\})\) as 
\begin{equation}
\label{eq:local_distance_ast}
d_\ast^2\bigl(\{\bm{\rho}_r\},\{\mathbf{C}_r\}\bigr)
:=
\sum_{r=1}^R
\Bigl(
\|\bm{\rho}_r-\bm{\rho}_r^\ast\|_F^2
+
\|\mathbf{C}_r-\mathbf{C}_r^\ast\|_F^2
\Bigr)
\end{equation}
with the quadratic growth penalty given by  
\begin{equation}
\label{eq:local_population_misfit_ast}
\mathcal{Q}_\ast\bigl(\{\bm{\rho}_r\},\{\mathbf{C}_r\}\bigr)
:=
\frac12
\sum_{r=1}^R
\left\|
\mathbf{C}_r\bm{\pi}_r(\bm{\rho}_r)
-
\mathbf{C}_r^\ast\bm{\pi}_r(\bm{\rho}_r^\ast)
\right\|_2^2 .
\end{equation}
which penalizes the observation model discrepancies under 
\((\{\bm{\rho}_r\},\{\mathbf{C}_r\})\) compared to those under 
\((\{\bm{\rho}_r^\ast\},\{\mathbf{C}_r^\ast\})\).

The local identifiability at a stationary point 
\(
(\{\bm{\rho}_r^\ast\},\{\mathbf{C}_r^\ast\})
\)
means that \( \exists \varepsilon>0\) such that \(\forall (\{\bm{\rho}_r\},\{\mathbf{C}_r\})\) with \(d_\ast^2(\{\bm{\rho}_r\},\{\mathbf{C}_r\})<\varepsilon\), if 
\(
\mathbf{C}_r\bm{\pi}_r(\bm{\rho}_r)=\mathbf{C}_r^\ast\bm{\pi}_r(\bm{\rho}_r^\ast) \, \forall r
\), then it holds that
\(
\bm{\rho}_r=\bm{\rho}_r^\ast
\)
and
\(
\mathbf{C}_r=\mathbf{C}_r^\ast \; \forall r\). This means a local stationary point is identifiable. Similarly, local quadratic growth guarantees that
\(
\mathcal{Q}_\ast\ge c\,d_\ast^2
\)
for some \(c>0\) and all feasible pairs with \(d_\ast\) sufficiently small. This implies that with a sufficiently small perturbation of a stationary point, the resulting solution incurs a quadratically larger penalty.

To state a sufficient condition for both properties, let
\(
\Delta\bm{\rho}_r:=\bm{\rho}_r-\bm{\rho}_r^\ast
\)
and
\(
\Delta\mathbf{C}_r:=\mathbf{C}_r-\mathbf{C}_r^\ast
\).
The linearized feasible perturbation set induced by
\eqref{eq:multi_region_estimation_problem_state}--\eqref{eq:multi_region_estimation_problem_overlap} is
\begin{align}
\mathcal{T}_\ast
&:= \left\{(\{\Delta\bm{\rho}_r\},\{\Delta\mathbf{C}_r\})\right\} \notag\\
&\text{s.t.}\quad
\Delta\bm{\rho}_r = \Delta\bm{\rho}_r^\dagger,\quad
\operatorname{Tr}(\Delta\bm{\rho}_r)=0,\quad
\mathbf{1}^\top\Delta\mathbf{C}_r=\mathbf{0}^\top, \notag\\
&\hphantom{\text{s.t.}\quad}
\Delta\bm{\rho}_r[r'] = \Delta\bm{\rho}_{r'}[r]
\quad \forall (r,r')\in\mathcal{O}.
\label{eq:linearized_feasible_set_ast}
\end{align}
where the term
\(
\bm{\pi}_r(\Delta\bm{\rho}_r)
\)
is understood from the regional counterpart of \eqref{eq:povm_born_rule}. Define also mapping
\begin{equation}
\label{eq:linearized_prediction_map_ast}
\mathcal{A}_\ast(\{\Delta\bm{\rho}_r\},\{\Delta\mathbf{C}_r\})
:=
\left\{
\Delta\mathbf{C}_r\,\bm{\pi}_r(\bm{\rho}_r^\ast)
+
\mathbf{C}_r^\ast\,\bm{\pi}_r(\Delta\bm{\rho}_r)
\right\}_{r=1}^R .
\end{equation}
The ensuing theorem shows that injectivity of \(\mathcal{A}_\ast\) on \(\mathcal{T}_\ast\) is sufficient for both local identifiability and local quadratic growth.

\begin{theorem}
\label{thm:local_identifiability}
Let
\(
(\{\bm{\rho}_r^\ast\},\{\mathbf{C}_r^\ast\})
\)
be a feasible interior reference pair for \eqref{eq:multi_region_estimation_problem}. Suppose that, for every
\(
(\{\Delta\bm{\rho}_r\},\{\Delta\mathbf{C}_r\})\in\mathcal{T}_\ast
\),
the implication
\(
\mathcal{A}_\ast(\{\Delta\bm{\rho}_r\},\{\Delta\mathbf{C}_r\})=0
\Rightarrow
\Delta\bm{\rho}_r=0,\ \Delta\mathbf{C}_r=0,\ \forall r
\)
holds. Then there exist small but nonzero constants \(\varepsilon>0\) and \(c>0\) so that every feasible pair
\(
(\{\bm{\rho}_r\},\{\mathbf{C}_r\})
\)
with \(d_\ast<\varepsilon\) satisfies
\(
\mathcal{Q}_\ast \ge c\,d_\ast^2
\);
in particular, if
\(
\mathbf{C}_r\bm{\pi}_r(\bm{\rho}_r)=\mathbf{C}_r^\ast\bm{\pi}_r(\bm{\rho}_r^\ast) \; \forall r\), then
\(
\bm{\rho}_r=\bm{\rho}_r^\ast
\)
and
\(
\mathbf{C}_r=\mathbf{C}_r^\ast \; \forall r
\).
\end{theorem}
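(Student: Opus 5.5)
Our approach is a second-order expansion of the bilinear map around the reference pair, combined with a compactness argument on the unit sphere of the linearized feasible set $\mathcal{T}_\ast$.

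First, for any feasible pair write $\Delta\bm{\rho}_r=\bm{\rho}_r-\bm{\rho}_r^\ast$ and $\Delta\mathbf{C}_r=\mathbf{C}_r-\mathbf{C}_r^\ast$. Since $\bm{\pi}_r(\cdot)$ is linear and the model is bilinear, we have the exact identity
\[
\mathbf{C}_r\bm{\pi}_r(\bm{\rho}_r)-\mathbf{C}_r^\ast\bm{\pi}_r(\bm{\rho}_r^\ast)
=\bigl[\mathcal{A}_\ast(\{\Delta\bm{\rho}_r\},\{\Delta\mathbf{C}_r\})\bigr]_r+\Delta\mathbf{C}_r\,\bm{\pi}_r(\Delta\bm{\rho}_r).
\]
Because $\mathcal{D}_r$, $\mathcal{C}_r$ and the overlap constraint \eqref{eq:multi_region_estimation_problem_overlap} are affine or convex and contain the reference pair, the differences of feasible pairs satisfy Hermiticity, zero trace, $\mathbf{1}^\top\Delta\mathbf{C}_r=\mathbf{0}^\top$, and $\Delta\bm{\rho}_r[r']=\Delta\bm{\rho}_{r'}[r]$ by linearity of the partial trace. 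Hence every feasible difference lies exactly in $\mathcal{T}_\ast$, and no tangent-cone approximation is needed; the interior condition is not even required for this inclusion.

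Second, note that $\mathcal{T}_\ast$ is a finite-dimensional real linear subspace and $\mathcal{A}_\ast$ is linear on it. The hypothesis says $\mathcal{A}_\ast$ is injective on $\mathcal{T}_\ast$. By compactness of the unit sphere of $\mathcal{T}_\ast$ in the norm $d_\ast$, we get $\sigma:=\min_{\|\Delta\|=1,\Delta\in\mathcal{T}_\ast}\|\mathcal{A}_\ast(\Delta)\|>0$, where $\|\mathcal{A}_\ast(\Delta)\|^2=\sum_r\|[\mathcal{A}_\ast(\Delta)]_r\|_2^2$. Homogeneity then gives $\|\mathcal{A}_\ast(\Delta)\|\ge\sigma\, d_\ast$ on $\mathcal{T}_\ast$.

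Third, bound the quadratic remainder: $\|\Delta\mathbf{C}_r\bm{\pi}_r(\Delta\bm{\rho}_r)\|_2\le\|\Delta\mathbf{C}_r\|_F\,L_r\|\Delta\bm{\rho}_r\|_F$, where $L_r$ is the operator norm of $\bm{\rho}\mapsto\bm{\pi}_r(\bm{\rho})$, for instance $L_r\le(\sum_m\|\mathbf{E}_{r,m}\|_F^2)^{1/2}$. Summing over regions gives a remainder norm of at most $L\,d_\ast^2/2$ with $L=\max_r L_r$, using $ab\le(a^2+b^2)/2$. The triangle inequality then yields
\[
\sqrt{2\mathcal{Q}_\ast}\ge\sigma d_\ast-\tfrac{L}{2}d_\ast^2\ge\tfrac{\sigma}{2}d_\ast
\]
whenever $d_\ast<\varepsilon:=\sigma/L$. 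Therefore $\mathcal{Q}_\ast\ge c\,d_\ast^2$ with $c=\sigma^2/8$. The identifiability claim follows immediately: equal predictions force $\mathcal{Q}_\ast=0$, hence $d_\ast=0$.

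The main obstacle is the step justifying that feasible differences lie in $\mathcal{T}_\ast$ and that $\sigma>0$. The first part we would handle by the exact affine argument above. For the second, we must make sure the minimum is taken over a closed subspace, not a cone. The interior assumption is then only a safeguard; we would remark that it guarantees $\mathcal{T}_\ast$ coincides with the true local feasible directions, so the condition is not vacuous.
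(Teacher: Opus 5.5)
Your proposal is correct and follows essentially the same route as the paper's proof. Both use the exact bilinear expansion with remainder $\Delta\mathbf{C}_r\,\bm{\pi}_r(\Delta\bm{\rho}_r)$ and a compactness argument on the $d_\ast$-unit sphere of the subspace $\mathcal{T}_\ast$, giving $\sigma>0$ (the paper's $\alpha$). Both then bound the remainder by $\tfrac{L}{2}d_\ast^2$ via the operator norm and $ab\le(a^2+b^2)/2$, arriving at the same constant $c=\sigma^2/8$. Your remark that the interior condition is never actually used in the argument also matches the paper's proof, which does not invoke it.
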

\begin{remark}
\label{rem:injectivity_interpretation}
The assumption in Theorem~\ref{thm:local_identifiability} is precisely 
$
\ker(\mathcal{A}_\ast)\cap \mathcal{T}_\ast=\{(\mathbf{0},\mathbf{0})\}.
$ Equivalently, for every nonzero
\(
(\{\Delta\bm{\rho}_r\},\{\Delta\mathbf{C}_r\})\in\mathcal{T}_\ast
\),
there exists at least one region \(r\) such that
\(
\Delta\mathbf{C}_r\,\bm{\pi}_r(\bm{\rho}_r^\ast)
+
\mathbf{C}_r^\ast\,\bm{\pi}_r(\Delta\bm{\rho}_r)\neq 0
\).
That is, the linearization of
\(
(\{\bm{\rho}_r\},\{\mathbf{C}_r\}) \mapsto \{\mathbf{C}_r\bm{\pi}_r(\bm{\rho}_r)\}_{r=1}^R
\)
at
\(
(\{\bm{\rho}_r^\ast\},\{\mathbf{C}_r^\ast\})
\)
is injective on \(\mathcal{T}_\ast\).
\end{remark}

\subsection{Inner-ADMM convergence}

The inner ADMM loop is studied here to solve the state-update step in the distributed alternating scheme. Since the full problem \eqref{eq:multi_region_estimation_problem} is jointly nonconvex in
\(
(\{\bm{\rho}_r\},\{\mathbf{C}_r\})
\),
the result is stated for a fixed outer iterate \(k\), with the confusion matrices
\(
\{\mathbf{C}_r^k\}_{r=1}^R
\)
held fixed. The state-update step reduces then to the convex proximal state subproblem in \eqref{eq:proximal_state_subproblem_overlap}. Its objective is
\begin{align}
\label{eq:state_subproblem_objective}
& \mathcal J_{\rho}^{k}\!\bigl(\{\bm{\rho}_r\};\{\mathbf C_r\}\bigr) 
:= 
\nonumber 
\\
& \qquad \; \sum_{r=1}^R
\left[
D\bigl(\hat{\bm{\pi}}_r,\mathbf{C}_r^k\bm{\pi}_r(\bm{\rho}_r)\bigr)
+
\frac{\gamma_\rho}{2}\|\bm{\rho}_r-\bm{\rho}_r^k\|_F^2
\right].
\end{align}
For the inner ADMM iterates, define the overlap residuals
\begin{equation}
\label{eq:overlap_residual_definition_theorem}
\bm{\delta}_{rr'}^\ell
:=
\bm{\rho}^\ell_{r}[r']-\bm{\rho}_{rr'}^\ell,
\qquad
(r,r')\in\mathcal{O}.
\end{equation}
Primal feasibility of the inner loop requires that
\(
\bm{\delta}_{rr'}^\ell\to 0
\)
and
\(
\bm{\delta}_{r'r}^\ell\to 0
\; \forall (r,r') \in \mathcal{O}\), as $\ell \to \infty$.

The next theorem shows that, under standard convexity assumptions, the inner ADMM loop converges to the unique solution of the proximal state subproblem; see  Appendix~\ref{app:proof_proximal_state_admm_convergence} for the proof.

\begin{theorem}
\label{thm:proximal_state_admm_convergence}
Fix an outer iterate \(k\), and suppose that
\(
\{\mathbf{C}_r^k\}_{r=1}^R
\)
satisfies \eqref{eq:multi_region_estimation_problem_confusion}. Assume that, for every \(r\), the function
\(
D(\hat{\bm{\pi}}_r,\cdot)
\)
is closed, proper, and convex, and that the proximal state subproblem \eqref{eq:proximal_state_subproblem_overlap} has an optimal solution. Then, for any \(\beta>0\) and any initialization of the inner ADMM iterations \eqref{eq:admm_generic_updates}, the residuals satisfy
\(
\bm{\delta}_{rr'}^\ell\to 0
\)
and
\(
\bm{\delta}_{r'r}^\ell\to 0 \; \forall (r,r')\in\mathcal{O}\), the objective values converge to the optimal value of \eqref{eq:proximal_state_subproblem_overlap}, and the primal sequence
\(
(\{\bm{\rho}_r^\ell\},\{\bm{\rho}_{rr'}^\ell\})
\)
converges to the unique solution of \eqref{eq:proximal_state_subproblem_overlap}.
\end{theorem}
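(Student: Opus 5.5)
The plan is to recast the proximal state subproblem \eqref{eq:proximal_state_subproblem_overlap} as a standard two-block convex problem and then invoke the classical ADMM convergence theorem (Eckstein--Bertsekas / Boyd et al.).

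\textbf{Two-block reformulation.} Stack the regional states as $\mathbf{x}:=\{\bm{\rho}_r\}_{r=1}^R$ and the overlap variables as $\mathbf{z}:=\{\bm{\rho}_{rr'}\}$. Let $\mathcal{P}$ be the linear map collecting all partial traces $\bm{\rho}_r[r']$ and $\bm{\rho}_{r'}[r]$ over pairs in $\mathcal{O}$. Let $\mathcal{B}$ be the linear map that copies each shared variable into both constraint slots, reflecting the convention $\bm{\rho}_{rr'}=\bm{\rho}_{r'r}$.

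The problem then becomes
\[
\min_{\mathbf{x},\mathbf{z}} \; F(\mathbf{x})+G(\mathbf{z}) \quad \text{s.t.} \quad \mathcal{P}\mathbf{x}-\mathcal{B}\mathbf{z}=0,
\]
with
\[
F(\mathbf{x})=\sum_r\bigl[f_r^k(\bm{\rho}_r)+\iota_{\mathcal{D}_r}(\bm{\rho}_r)\bigr], \qquad G\equiv 0 .
\]
Here $F$ is closed, proper and convex. Indeed, $D(\hat{\bm{\pi}}_r,\cdot)$ composed with the affine map $\bm{\rho}_r\mapsto \mathbf{C}_r^k\bm{\pi}_r(\bm{\rho}_r)$ stays closed and convex, the quadratic proximal term is convex, and $\mathcal{D}_r$ is a nonempty compact convex set. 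Properness also needs nonempty domain intersecting $\mathcal{D}_r$, which follows from the assumed existence of an optimal solution.

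I would check that \eqref{eq:admm_generic_updates} is exactly the standard two-block ADMM for this splitting, using Proposition~\ref{prop:explicit_admm_updates}. The $\mathbf{x}$-step is separable across regions, the $\mathbf{z}$-step gives the averaging \eqref{eq:admm_consensus_update}, and the multiplier step is the usual dual ascent.

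\textbf{Existence of a saddle point.} The classical theorem requires that the unaugmented Lagrangian have a saddle point. Since the constraints are linear and $G\equiv 0$, I would obtain this from strong duality under a relative-interior (Slater-type) condition. Polyhedral constraints need only feasibility, which is satisfied: the optimum exists, and a point feasible for all constraints together lies in $\operatorname{dom}F$. The step I expect to be most delicate is making sure the constraint qualification holds with the PSD constraint. The simplest route is to note that the coupling constraints are affine, so the qualification needed is only $\operatorname{ri}(\operatorname{dom}F)\cap\{\text{affine set}\}\neq\varnothing$. A feasible point built from the maximally mixed states $\mathbf{I}/2^{q_r}$ lies in the relative interior of every $\mathcal{D}_r$, and all its reductions agree. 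If $D$ is KL, one must additionally check finiteness at that point, e.g. when $\mathbf{C}_r^k\bm{\pi}_r(\mathbf{I}/2^{q_r})>0$; otherwise I would fall back on citing the general Rockafellar condition.

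\textbf{Conclusions.} With a saddle point in hand, the standard theorem gives, for any $\beta>0$ and any initialization, three facts:
\begin{enumerate}[label=(\roman*)]
\item the residual $\mathcal{P}\mathbf{x}^\ell-\mathcal{B}\mathbf{z}^\ell\to0$, i.e. $\bm{\delta}_{rr'}^\ell,\bm{\delta}_{r'r}^\ell\to0$;
\item the objective converges to the optimal value;
\item the duals converge.
\end{enumerate}

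For primal convergence, I use that $F$ is $\gamma_\rho$-strongly convex, so the optimal $\mathbf{x}^\star$ is unique. The optimal $\mathbf{z}^\star$ is then determined uniquely through the constraint, because $\mathcal{B}$ is injective. To show the iterates themselves converge, I would use the strong-convexity variant of the ADMM Lyapunov inequality. That variant contains an extra term $\gamma_\rho\|\mathbf{x}^{\ell+1}-\mathbf{x}^\star\|_F^2$ whose sum is finite, giving $\mathbf{x}^\ell\to\mathbf{x}^\star$. Then $\mathbf{z}^\ell\to\mathbf{z}^\star$ follows by the averaging formula together with the vanishing residuals.
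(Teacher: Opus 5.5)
Your proposal is correct. Its skeleton matches the paper: the two-block splitting, the standard ADMM theorem, and uniqueness of the optimum from $\gamma_\rho$-strong convexity plus injectivity of the copy map. You diverge in two useful places.

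First, the paper uses the maximally mixed states only to exhibit a feasible point, and then applies the ADMM theorem directly from ``an optimal solution exists.'' You instead verify the hypothesis that theorem actually requires, namely a saddle point of the unaugmented Lagrangian. You do this through the relative-interior condition, using $\mathbf{I}/2^{q_r}\in\operatorname{ri}\mathcal{D}_r$. Your sentence on polyhedral constraints should read $\operatorname{ri}(\operatorname{dom}F)$ rather than $\operatorname{dom}F$, since $F$ contains the PSD indicator; you do state this correctly in the next sentence. Your KL caveat resolves itself. If $[\mathbf{C}_r^k\bm{\pi}_r(\bm{\rho})]_m>0$ for some $\bm{\rho}\in\mathcal{D}_r$, then some $m'$ has $[\mathbf{C}_r^k]_{mm'}>0$ and $\mathbf{E}_{r,m'}\neq\mathbf{0}$, so the same strict inequality holds at $\mathbf{I}/2^{q_r}$. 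Hence the maximally mixed point lies in $\operatorname{ri}(\operatorname{dom}F)$ whenever $F$ is proper. A constraint qualification has to be assumed only for more exotic $D$, and then the paper's argument needs it too.

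Second, for convergence of the iterates the paper argues by compactness. It uses $\|\bm{\rho}_r^\ell\|_F\le 1$, shows $\bm{\rho}_{rr'}^\ell=\bm{\rho}_r^\ell[r']-\bm{\delta}_{rr'}^\ell$ is bounded, notes that every cluster point is optimal, and concludes from uniqueness of the optimum. You instead put the strong-convexity term into the Lyapunov descent inequality, obtaining $\gamma_\rho\sum_\ell\|\mathbf{x}^{\ell+1}-\mathbf{x}^\star\|^2\le V^0<\infty$.

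The paper's route uses only the black-box conclusions of the standard theorem plus boundedness of the density-matrix sets. Yours requires re-deriving the descent inequality around a saddle point, but it buys three things:
\begin{enumerate}[label=(\roman*)]
\item it does not need $\mathcal{D}_r$ to be bounded;
\item it gives a quantitative square-summability bound;
\item it avoids the paper's appeal to Eckstein--Bertsekas Thm.~8 for the cluster-point claim. That theorem assumes a full-column-rank coupling map, which the stacked partial-trace map is not. The claim itself still follows from vanishing residuals, objective convergence, and lower semicontinuity of $F$.
\end{enumerate}
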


\begin{remark}
Theorem~\ref{thm:proximal_state_admm_convergence} concerns only the inner ADMM loop for the state-update step at a fixed outer iterate \(k\). It does not assert convergence of the full outer alternating scheme in which the confusion matrices are subsequently updated by \eqref{eq:admm_confusion_update}.
\end{remark}

\subsection{Communication and Computation Complexities}

We next quantify the communication and memory scaling of our novel regional formulation. Specifically, we compare the number of free parameters in the global and regional models and bound the communication incurred by one inner ADMM iteration. Let
\(
P_{\mathrm{glob}}:=(4^q-1)+M(M-1)
\)
and
\(
P_{\mathrm{reg}}:=\sum_{r=1}^R\bigl[(4^{q_r}-1)+M_r(M_r-1)\bigr]
\)
denote the numbers of free real parameters in the global and regional models, respectively. Define the memory-compression factor
\(
F_{\mathrm{mem}}:=P_{\mathrm{glob}}/P_{\mathrm{reg}}
\),
and let
\(
q_{\max}:=\max_r q_r
\),
\(
q_{\min}:=\min_r q_r
\),
\(
q_{\mathrm{ov},\max}:=\max_{(r,r')\in\mathcal O} q_{rr'}
\),
and
\(
d_{\max}:=\max_r |\{r':(r,r')\in\mathcal O\}|
\),
where \(d_{\max}\) is the maximum degree of the overlap graph. Assume further that the regional POVM sizes satisfy
\(
4^{q_r}\le M_r\le \mu\,4^{q_r}
\)
for all \(r\) and some constant \(\mu\ge 1\). Finally, counting both transmission directions in one inner ADMM iteration, define
\(
N_{\mathrm{comm}}:=\sum_{r=1}^R\sum_{r':(r,r')\in\mathcal O}4^{q_{rr'}}.
\)

\begin{proposition}
\label{prop:quantitative_scaling_advantage}
If \(M\ge 4^q\) and \(4^{q_r}\le M_r\le \mu\,4^{q_r}\) $\forall$ \(r\), then
\begin{equation}
\label{eq:quantitative_scaling_advantage_bounds}
\begin{aligned}
P_{\mathrm{reg}}
&\le (1+\mu^2)\,R\,16^{q_{\max}},\\
P_{\mathrm{glob}}
&\ge 16^q-4^q,\\
F_{\mathrm{mem}}
&\ge \frac{16^q-4^q}{(1+\mu^2)\,R\,16^{q_{\max}}},\\
N_{\mathrm{comm}}
&\le d_{\max}R\,4^{q_{\mathrm{ov},\max}},\\
\frac{N_{\mathrm{comm}}}{P_{\mathrm{reg}}}
&\le
\frac{d_{\max}}{1-4^{-q_{\min}}}\,
4^{\,q_{\mathrm{ov},\max}-2q_{\min}}.
\end{aligned}
\end{equation}
In particular, if \(q_{\max}\), \(q_{\mathrm{ov},\max}\), and \(d_{\max}\) remain uniformly bounded as \(q\) grows, then
\(
P_{\mathrm{reg}}=O(R),
\)
\(
N_{\mathrm{comm}}=O(R),
\)
and
\(
P_{\mathrm{glob}}=O(16^q)
\);
moreover, the lower bound on \(F_{\mathrm{mem}}\) in \eqref{eq:quantitative_scaling_advantage_bounds} grows exponentially in \(q-q_{\max}\).
\end{proposition}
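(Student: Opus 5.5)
The bounds are elementary counting estimates, so I will prove them one at a time, directly from the definitions of \(P_{\mathrm{glob}}\), \(P_{\mathrm{reg}}\), \(F_{\mathrm{mem}}\), and \(N_{\mathrm{comm}}\) and the assumptions \(M\ge 4^q\) and \(4^{q_r}\le M_r\le\mu 4^{q_r}\).

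\emph{Regional count.} For each region, \(4^{q_r}-1\le 16^{q_r}\) and \(M_r(M_r-1)\le M_r^2\le \mu^2 16^{q_r}\). Hence each summand is at most \((1+\mu^2)16^{q_r}\le(1+\mu^2)16^{q_{\max}}\), and summing over \(R\) regions gives the first bound.

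\emph{Global count.} The map \(M\mapsto M(M-1)\) is increasing for \(M\ge1\). With \(M\ge 4^q\) this gives \(M(M-1)\ge 16^q-4^q\), and since \(4^q-1\ge0\) the second bound follows.

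\emph{Memory factor.} Dividing the global lower bound by the regional upper bound gives the lower bound on \(F_{\mathrm{mem}}\).

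\emph{Communication.} Each region has at most \(d_{\max}\) overlapping neighbours, and each term satisfies \(4^{q_{rr'}}\le 4^{q_{\mathrm{ov},\max}}\). This gives \(N_{\mathrm{comm}}\le d_{\max}R\,4^{q_{\mathrm{ov},\max}}\).

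\emph{Communication-to-memory ratio.} This step needs a lower bound on \(P_{\mathrm{reg}}\) rather than an upper one. Since \(M_r\ge 4^{q_r}\),
\[
P_{\mathrm{reg}}\ge\sum_r M_r(M_r-1)\ge \sum_r(16^{q_r}-4^{q_r}) = \sum_r 16^{q_r}(1-4^{-q_r}).
\]
Each term is at least \(16^{q_{\min}}(1-4^{-q_{\min}})\), because both \(16^{q_r}\) and \(1-4^{-q_r}\) are increasing in \(q_r\). So \(P_{\mathrm{reg}}\ge R\,16^{q_{\min}}(1-4^{-q_{\min}})\). Dividing the \(N_{\mathrm{comm}}\) bound by this gives
\[
\frac{N_{\mathrm{comm}}}{P_{\mathrm{reg}}}\le \frac{d_{\max}}{1-4^{-q_{\min}}}\,4^{q_{\mathrm{ov},\max}-2q_{\min}},
\]
as claimed. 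The main point to check is that the ratio bound uses this lower bound on \(P_{\mathrm{reg}}\), and that \(q_{\min}\ge1\), which keeps the denominator positive.

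\emph{Asymptotic statements.*} If \(q_{\max}\), \(q_{\mathrm{ov},\max}\), and \(d_{\max}\) are bounded, the upper bounds directly give \(P_{\mathrm{reg}}=O(R)\) and \(N_{\mathrm{comm}}=O(R)\). The lower bound on \(F_{\mathrm{mem}}\) behaves like \(16^{q-q_{\max}}(1-4^{-q})/((1+\mu^2)R)\), which grows exponentially in \(q-q_{\max}\) for fixed \(R\), or whenever \(R\) grows subexponentially, e.g. \(R\le q\).

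The statement \(P_{\mathrm{glob}}=O(16^q)\) needs an upper bound on \(M\), which the hypothesis \(M\ge4^q\) does not supply. I will state it under the natural convention \(M\le\mu 4^q\), mirroring the regional assumption, or read it as \(\Omega(16^q)\). Of all the steps, only this one requires interpretation.
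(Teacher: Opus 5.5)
Your proof is correct and matches the paper's argument step for step: the same termwise upper bound on $P_{\mathrm{reg}}$, the same bound $M(M-1)\ge 16^q-4^q$, the same degree count for $N_{\mathrm{comm}}$, and the same lower bound $P_{\mathrm{reg}}\ge R(16^{q_{\min}}-4^{q_{\min}})$ for the ratio, where you also justify the monotonicity step and note that $q_{\min}\ge 1$ is needed. Your caveat about $P_{\mathrm{glob}}=O(16^q)$ is well founded, since the paper asserts it even though the hypotheses only give $P_{\mathrm{glob}}=\Omega(16^q)$; your extra condition on $R$ holds automatically, because bounded $d_{\max}$ puts every site in at most $d_{\max}+1$ regions, so $R\le (d_{\max}+1)S\le (d_{\max}+1)q$.
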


Proposition~\ref{prop:quantitative_scaling_advantage} makes the scaling separation explicit: the memory cost of the regional model depends exponentially on the largest region size \(q_{\max}\), whereas the per-inner-iteration communication depends exponentially only on the largest overlap size \(q_{\mathrm{ov},\max}\), not on the full-system size \(q\). The proof is given in Appendix~\ref{app:proof_quantitative_scaling_advantage}.


\section{Simulations}
\label{sec:simulations}

\subsection{Simulation Setup}
\label{subsec:simulation_setup}

As no standard public benchmark combines overlapping regional tomography with regional readout-confusion matrices, we use a synthetic benchmark constructed for~\eqref{eq:multi_region_estimation_problem}. It leverages established tomography workflows and public resources, including QDataSet, Qiskit Experiments, and pyGSTi~\cite{Perrier2022QDataSet,Perrier2021QDataSetRepo,Kanazawa2023QiskitExperiments,QiskitTomographyDocs2025,Nielsen2020pyGSTi,Nielsen2021GST}. Unless stated otherwise, all experiments use the default settings in Table~\ref{tab:simulation_setup}.

For each experiment, we generate the global \(q\)-qubit ground-truth state as
\begin{equation}
\label{eq:sim_setup_global_state}
\bm{\rho}^{\star}
=
(1-\nu)\ket{\bm{\psi}}\!\bra{\bm{\psi}}
+
\frac{\nu}{Q} {\mathbf I}
\qquad
\ket{\bm{\psi}} := \mathbf U \ket{0}^{\otimes q},
\end{equation}
where \(\mathbf U\in\mathbb{C}^{Q\times Q}\) is a random unitary matrix and
\(\nu\in[0,1)\) sets the mixing weight between the pure state
\(\ket{\bm{\psi}}\!\bra{\bm{\psi}}\) and the maximally mixed state; that is $(1/Q)\mathbf I$. For region \(r\), the ground-truth state \(\bm{\rho}_r^\star\) is
obtained by reducing \(\bm{\rho}^{\star}\) to region \(r\).

Per region \(r\), we use
\(
M_r = Q_r^2
\)
outcomes, where
\(
Q_r = 2^{q_r}.
\)
The ground-truth \(\mathbf C_r^\star\) is generated by perturbing the ideal confusion matrix \(\mathbf I_r\) and projecting the result onto the feasible set \(\mathcal C_r\). To quantify the \emph{deviation from ideal readout}, we define
\begin{equation}
\label{eq:sim_setup_deltaC}
\delta_C^\star
:=
\frac{1}{R}
\sum_{r=1}^R
\frac{\|\mathbf C_r^\star-\mathbf I_r\|_F}{\|\mathbf I_r\|_F}.
\end{equation}
Thus, \(\delta_C^\star\) is the average relative distance between the ground-truth confusion matrices and the ideal confusion matrix.

In all simulations, we set \(q_s=1\) per site. We consider four regional graph geometries, namely \textsc{Ring}, \textsc{Ladder}, \textsc{Torus}, and \textsc{Hub}, as illustrated in Fig.~\ref{fig:regional_geometries}. In all cases, each region \(\mathcal{R}_r\) contains four sites (\(|\mathcal{R}_r|=4\), so \(q_r=4\)), and each overlap contains two sites (\(q_{\mathrm{ov}}=2\)). The \textsc{Ring} uses length-\(4\) regions on a periodic cycle; the \textsc{Ladder} uses overlapping \(2\times2\) plaquettes on a periodic two-leg ladder; the \textsc{Torus} uses overlapping \(2\times2\) plaquettes on a periodic square lattice; and the \textsc{Hub} uses \(4\)-site regions sharing a common \(2\)-site core. The total number of qubits \(q\) and the number of regions \(R\) for each geometry are given in Fig.~\ref{fig:regional_geometries}.

For each region \(r\), we use \(T_r\) measurement shots to form~\(\hat{\bm{\pi}}_r\). The total number of shots is
\begin{equation*}
T_{\mathrm{tot}} := \sum_{r=1}^R T_r.
\end{equation*}

We compare three estimators. The fixed-ideal estimator \(\mathsf I\) assumes \(\mathbf C_r=\mathbf I_r, \forall r\). The proposed joint estimator \(\mathsf J\) jointly estimates \(\{\bm{\rho}_r,\mathbf C_r\}_{r=1}^R\) by solving \eqref{eq:multi_region_estimation_problem}. Finally, the oracle denoted by \(\mathsf O\) uses the true confusion matrices \(\mathbf C_r^\star\) to estimate the states. Thus, the \(\mathsf I\) estimator ignores the readout errors, the \(\mathsf J\) estimator estimates them, and \(\mathsf O\) gives the oracle benchmark with perfect readout knowledge. 

We use the \emph{state} and \emph{confusion-matrix} errors as the main performance metrics
\begin{equation}
\label{eq:sim_setup_erho}
e_{\rho}
=
\frac{1}{R}\sum_{r=1}^R
\frac{\|\hat{\bm{\rho}}_r-\bm{\rho}_r^\star\|_F}{\|\bm{\rho}_r^\star\|_F},
\quad
e_C
=
\frac{1}{R}\sum_{r=1}^R
\frac{\|\hat{\mathbf C}_r-\mathbf C_r^\star\|_F}{\|\mathbf C_r^\star\|_F}.
\end{equation}

\begin{table}[t]
\centering
\caption{Default simulation settings used unless stated otherwise.}
\label{tab:simulation_setup}
\setlength{\tabcolsep}{5pt}
\begin{tabular}{lll}
\toprule
Category & Symbol & Default choice \\
\midrule
Site size & \(q_s\) & \(1\) qubit per site \\
Region size & \(q_r\) & \(4\) qubits per region \\
Overlap size & \(q_{\mathrm{ov}}\) & \(2\) qubits \\
$\#$ of measurements & \(M_r\) & \(Q_r^2 = 4^{q_r} = 256\) \\
Shots per region & \(T_r\) & \(10^4\) unless \(T_{\mathrm{tot}}\) is varied \\
Total shots & \(T_{\mathrm{tot}}\) & \(\sum_{r=1}^R T_r\) \\
Discrepancy & \(D(\mathbf a,\mathbf b)\) & \(\tfrac12\|\mathbf a-\mathbf b\|_2^2\) \\
Readout regularizer & \(\Omega_r(\mathbf C_r)\) & \(\|\mathbf C_r-\mathbf I\|_F^2\) \\
ADMM penalty & \(\beta\) & \(1\) \\
State proximal weight & \(\gamma_{\rho}\) & \(0.1\) \\
Confusion proximal weight & \(\gamma_C\) & \(0.1\) \\
Readout regularization & \(\lambda\) & \(10^{-2}\) unless varied \\
Quantum system state & \(\bm{\rho}^\star\) & mixed state in \eqref{eq:sim_setup_global_state} \\
Regional confusion matrix & \(\mathbf C_r^\star\) & perturb \(\mathbf I_r\), project onto \(\mathcal C_r\) \\
Graph geometries & -- & \textsc{Ring}, \textsc{Ladder}, \textsc{Torus}, \textsc{Hub} \\
Estimators & -- & ideal \(\mathsf I\), joint \(\mathsf J\), oracle \(\mathsf O\) \\
Performance measures & -- & \(e_{\rho}\), \(e_C\) \\
\bottomrule
$\star$ stands for ground truth.
\end{tabular}
\end{table}

\begin{figure}[t]
\centering

\subfloat[Ring (\(q=12\), \(R=6\))\label{fig:geom_ring}]{%
  \begin{minipage}[t]{0.48\columnwidth}
    \centering
    \includegraphics[width=\linewidth]{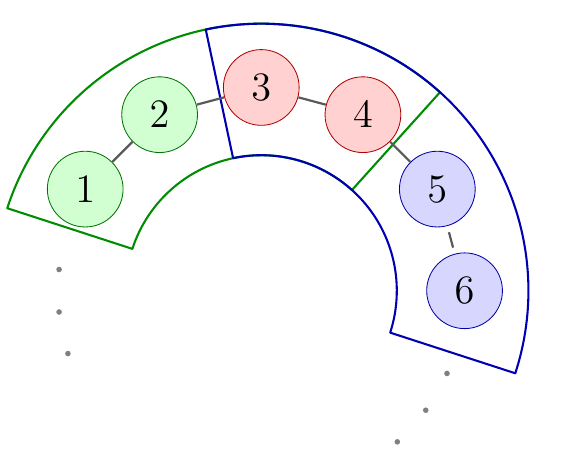}
  \end{minipage}
}
\hfill
\subfloat[Ladder (\(q=12\), \(R=6\))\label{fig:geom_ladder}]{%
  \begin{minipage}[t]{0.48\columnwidth}
    \centering
    \includegraphics[width=\linewidth]{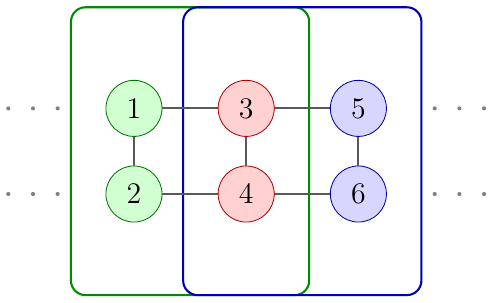}
  \end{minipage}
}

\vspace{0.4em}

\subfloat[Torus (\(q=16\), \(R=9\))\label{fig:geom_torus}]{%
  \begin{minipage}[t]{0.48\columnwidth}
    \centering
    \includegraphics[width=\linewidth]{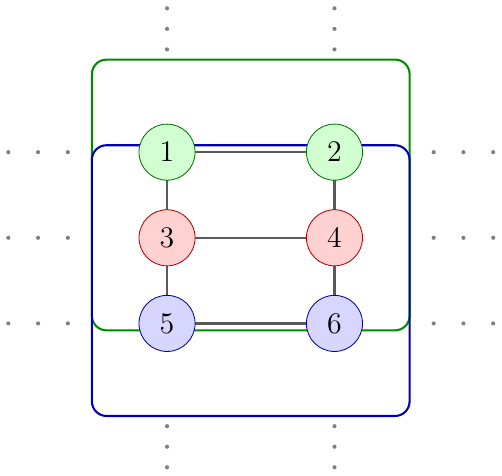}
  \end{minipage}
}
\hfill
\subfloat[Hub (\(q=14\), \(R=6\))\label{fig:geom_hub}]{%
  \begin{minipage}[t]{0.48\columnwidth}
    \centering
    \includegraphics[width=\linewidth]{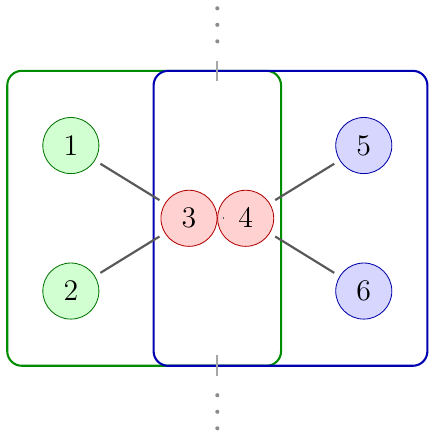}
  \end{minipage}
}

\caption{
Regional graph geometries used in the simulations. Each sub-figure shows a representative overlapping pair of regions. Red nodes denote the shared sites, while green and blue nodes belong only to the left and right region, respectively. In the \textsc{Hub} geometry, the shared sites form the common core.
}
\label{fig:regional_geometries}
\end{figure}

\subsection{Simulation Tests}

\begin{figure}[t]
\centering
\subfloat[Consensus residual error\label{fig:admm_dynamics_topology_a}]{%
\includegraphics[width=\columnwidth]{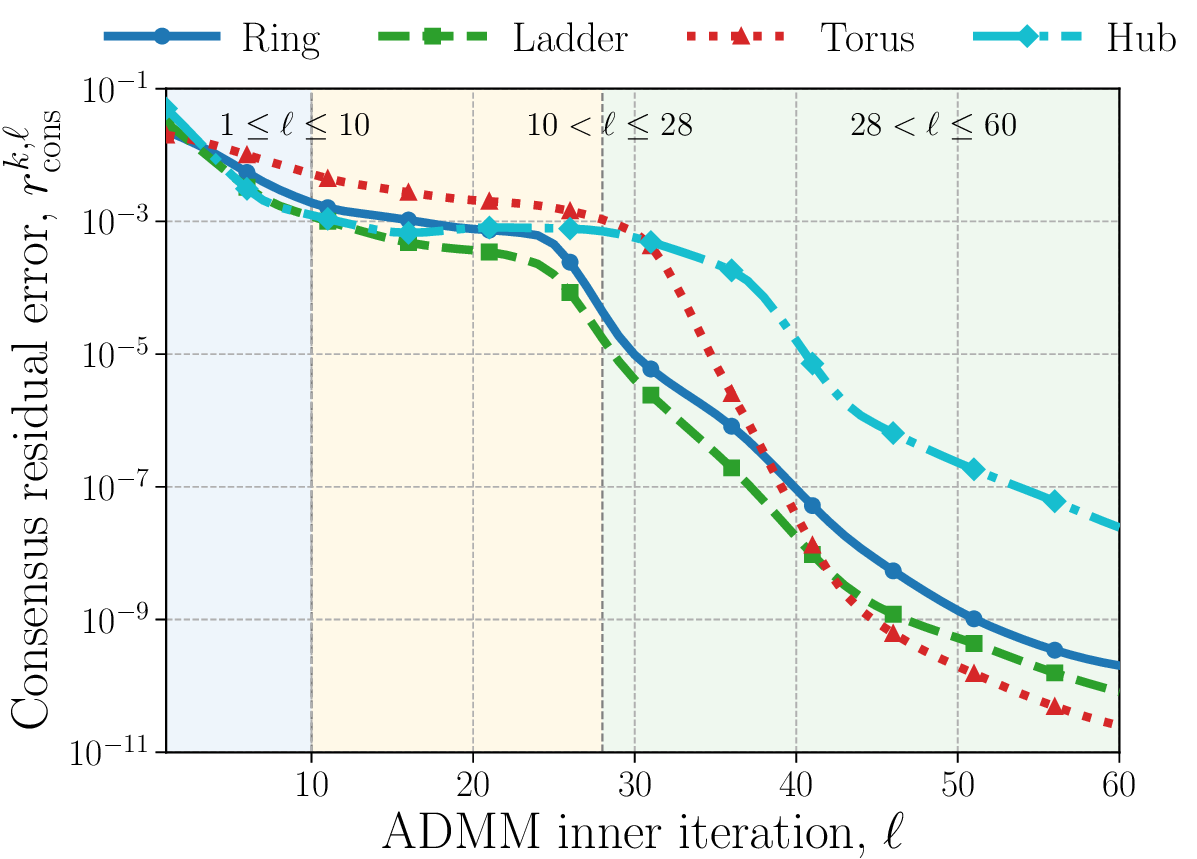}
}
\vspace{0.35em}
\subfloat[state optimality gap\label{fig:admm_dynamics_topology_b}]{%
\includegraphics[width=\columnwidth]{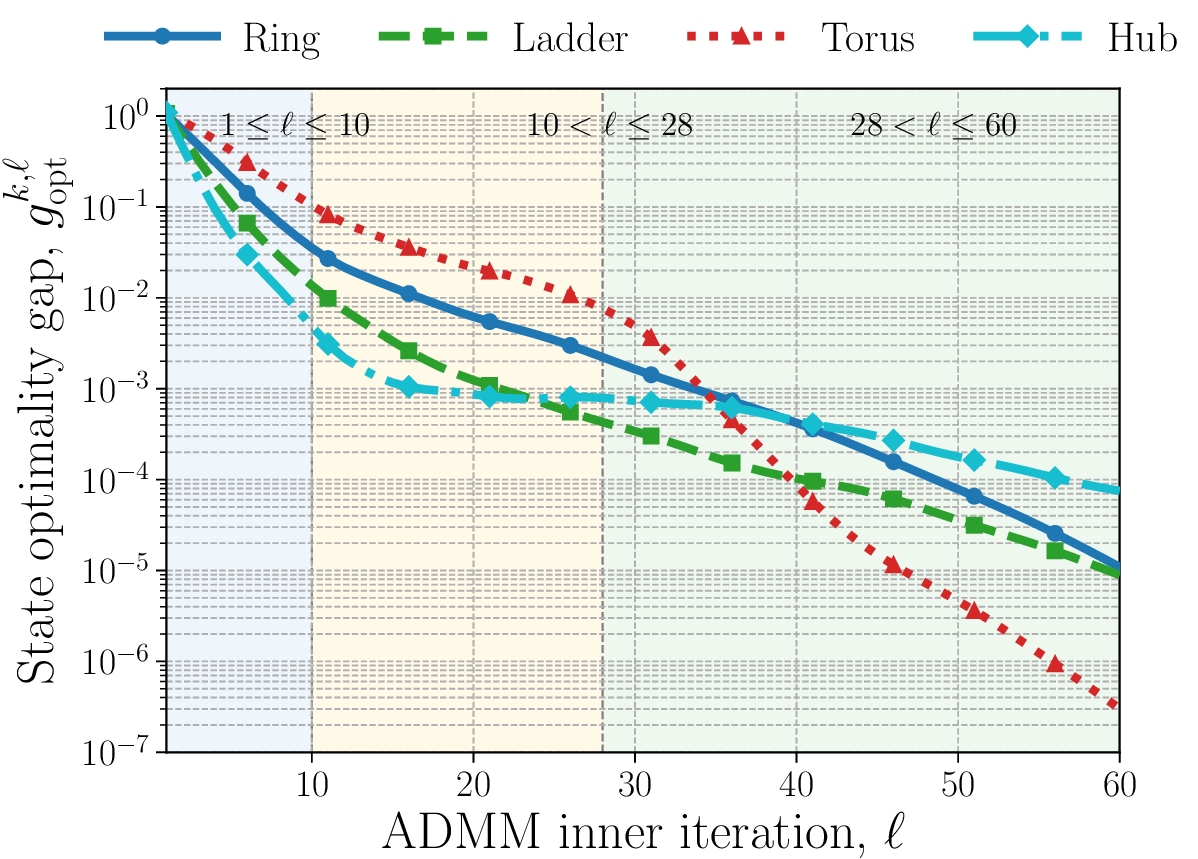}
}
\caption{Inner ADMM convergence for the four graph geometries.}
\label{fig:admm_dynamics_topology}
\end{figure}

To study how the graph geometry affects the inner ADMM step, we define the \emph{consensus residual} error
\begin{equation*}
r_{\mathrm{cons}}^{k,\ell}
\! :=
\Biggl(
\sum_{(r,r')\in\mathcal O} \!\!\! 
\|\bm{\rho}^{k,\ell}_{r}[r']-\bm{\rho}_{rr'}^{k,\ell}\|_F^2
+
\|\bm{\rho}^{k,\ell}_{r'}[r]-\bm{\rho}_{r'r}^{k,\ell}\|_F^2
\Biggr)^{1/2}\!\!\!\!\!
\end{equation*}
and the normalized \emph{state optimality} gap
\begin{equation*}
g_{\mathrm{opt}}^{k,\ell}
:=
\frac{
\mathcal J_{\rho}^{k}\!\bigl(\{\bm{\rho}_r^{k,\ell}\}_{r=1}^R\bigr)
-
\mathcal J_{\rho,\min}^{k}
}{
\max\{1,|\mathcal J_{\rho,\min}^{k}|\}
}
\end{equation*}
where
\begin{equation*}
\begin{aligned}
\mathcal J_{\rho}^{k}\!\bigl(\{\bm{\rho}_r\}; \{\mathbf C_r\}\bigr)
:=
\sum_{r=1}^R \Bigl[
\frac12
\left\|
\hat{\bm{\pi}}_r-\mathbf C_r^k\bm{\pi}_r(\bm{\rho}_r)
\right\|_2^2
\\
{}+
\frac{\gamma_\rho}{2}
\|\bm{\rho}_r-\bm{\rho}_r^k\|_F^2
\Bigr],
\\[0.3ex]
\MoveEqLeft[15]
\mathcal J_{\rho,\min}^{k}
:=
\min_{\{\bm{\rho}_r\}}
\mathcal J_{\rho}^{k}\!\bigl(\{\bm{\rho}_r\}_{r=1}^R\bigr).
\end{aligned}
\end{equation*}

Fig.~\ref{fig:admm_dynamics_topology_a} plots \(r_{\mathrm{cons}}^{k,\ell}\), and Fig.~\ref{fig:admm_dynamics_topology_b} plots \(g_{\mathrm{opt}}^{k,\ell}\) versus inner ADMM iteration \(\ell\). Fig.~\ref{fig:admm_dynamics_topology} demonstrates three main stages: early with \(1 \leq \ell \leq 10\), transition with \(10 < \ell \leq 28\), and late with \(28 < \ell \leq 60\). The \textsc{Hub} drops fastest in the first stage because all regions share the same core. Later, that same core becomes a bottleneck, hindering further reduction of the state optimality gap. The \textsc{Ring} improves more slowly because agreement propagates only through local nearest-neighbor overlaps. The \textsc{Ladder} is more balanced: it adds overlap constraints without creating a single bottleneck. The \textsc{Torus} exhibits the strongest late-stage decay because its denser overlap pattern offers rapid consensus among sites, while improving \(g_{\mathrm{opt}}^{k,\ell}\).

\begin{figure}[t]
\centering
\subfloat[State error\label{fig:outer_dynamics_regularization_a}]{%
\includegraphics[width=\columnwidth]{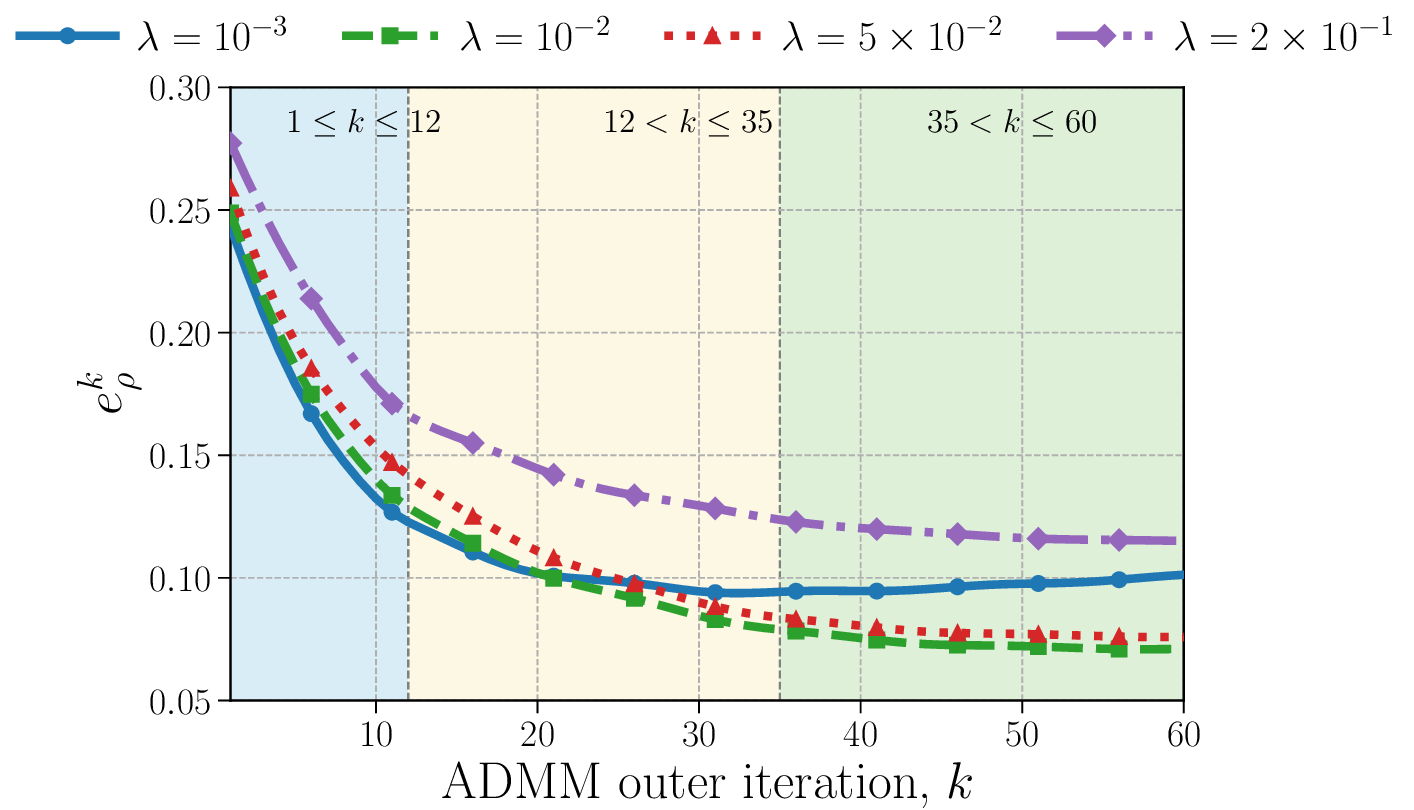}
}
\vspace{0.35em}
\subfloat[Confusion-matrix error\label{fig:outer_dynamics_regularization_b}]{%
\includegraphics[width=\columnwidth]{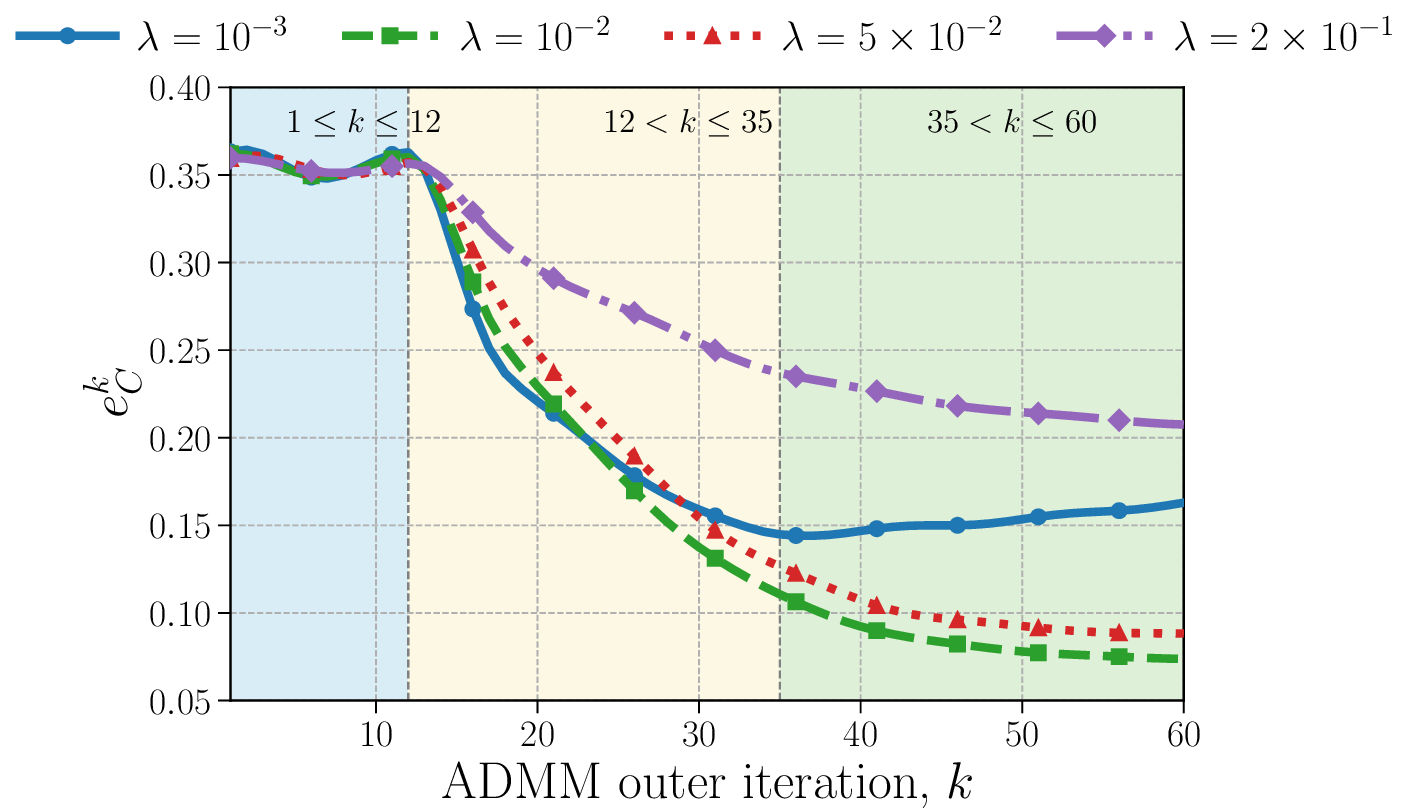}
}
\caption{Effect of the readout regularization parameter \(\lambda\) on the outer alternating updates.}
\label{fig:outer_dynamics_regularization}
\end{figure}

To study the effect of the readout regularization hyperparameter \(\lambda\) on the outer alternating updates, we track the state error \(e_\rho^k\) and the confusion-matrix error \(e_C^k\) across the outer iterations \(k\). These are defined as in \eqref{eq:sim_setup_erho}, with \(\hat{\bm{\rho}}_r\) and \(\hat{\mathbf C}_r\) replaced by \(\bm{\rho}_r^k\) and \(\mathbf C_r^k\), respectively. In this experiment, we use the \textsc{Ladder} geometry for various
\(
\lambda\in\{10^{-3},10^{-2},5\times 10^{-2},2\times 10^{-1}\}.
\)

Figures~\ref{fig:outer_dynamics_regularization_a} and~\ref{fig:outer_dynamics_regularization_b} depict the influence of \(\lambda\) on error terms. Its main role is to judiciously control the updates of the confusion matrices over iterations. When \(\lambda\) is too small, the state error drops quickly at first, but the readout updates become less stable later. When \(\lambda\) is too large, the confusion matrices stay too close to \(\mathbf I_r\), so both errors decrease more slowly. The best balance appears at intermediate values. Here this occurs at \(\lambda=10^{-2}\), which gives the lowest final \(e_{\boldsymbol{\rho}}\) and~\(e_{\mathbf{C}}\).

\begin{figure}[t]
\centering
\subfloat[State-recovery gain\label{fig:phase_diagram_joint_estimation_a}]{%
\includegraphics[width=\columnwidth]{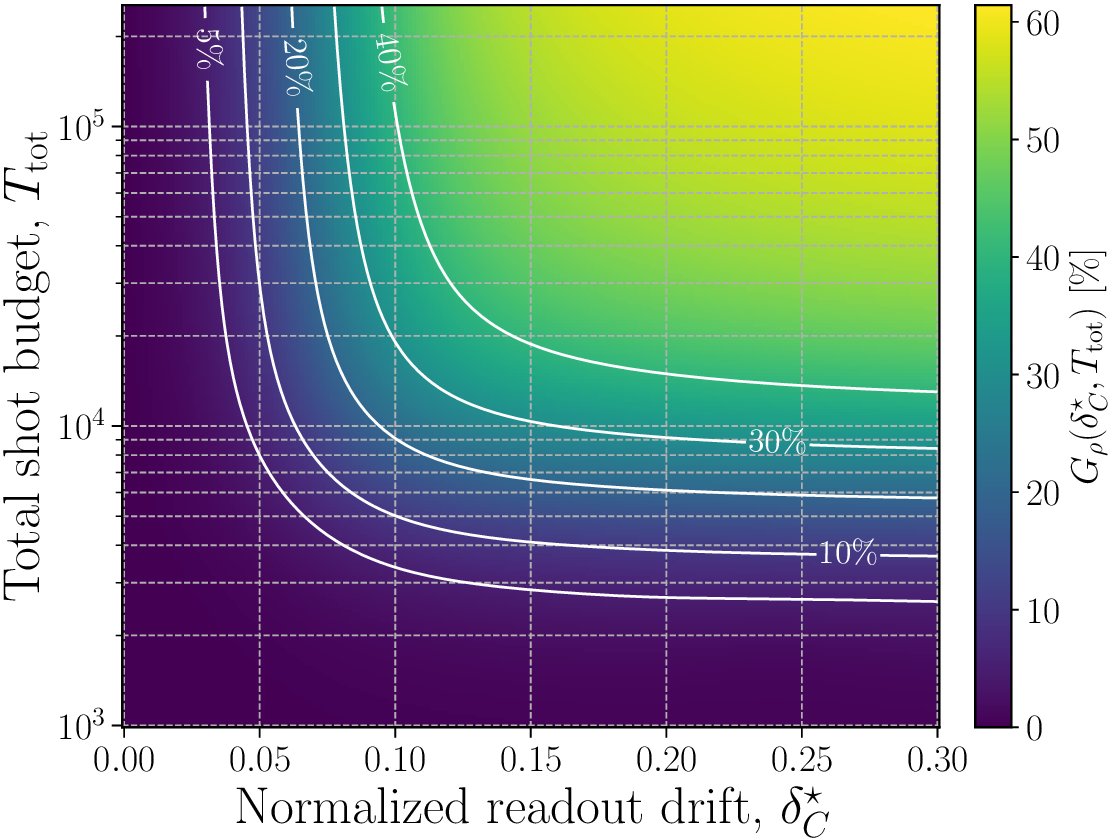}
}
\vspace{0.35em}
\subfloat[Oracle-gap \label{fig:phase_diagram_joint_estimation_b}]{%
\includegraphics[width=\columnwidth]{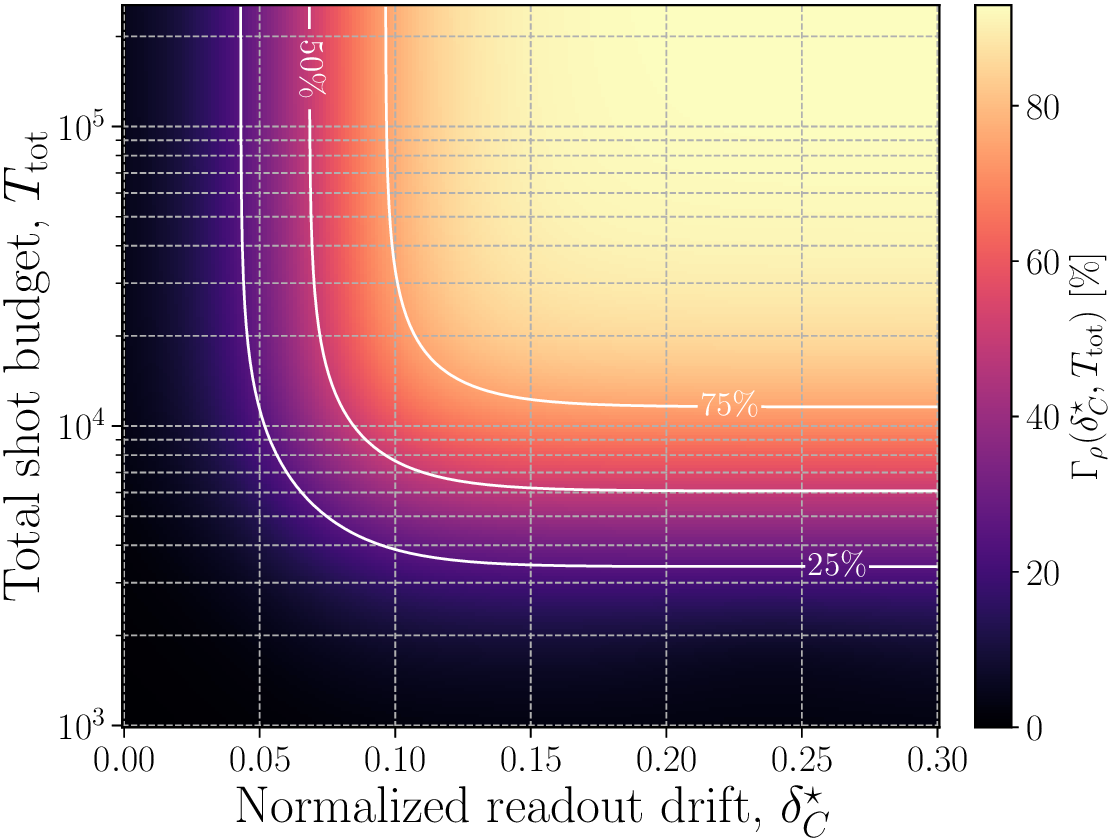}
}
\caption{Benefit of joint readout estimation as a function of \(\delta_C^\star\) and \(T_{\mathrm{tot}}\).}
\label{fig:phase_diagram_joint_estimation}
\end{figure}

The next set of experiments further investigate the pros and cons of our proposed joint state- and confusion-matrix estimation framework. We first use the \textsc{Ladder} geometry to compare the state errors of \(\mathsf I\), \(\mathsf J\), and \(\mathsf O\) estimators as functions of the deviation from ideal readout \(\delta_C^\star\) and the total number of shots \(T_{\mathrm{tot}}\). Let us define the \emph{state-recovery} gain of \(\mathsf J\) over \(\mathsf I\) as
\begin{equation*}
G_{\rho}(\delta_C^\star,T_{\mathrm{tot}})
:=
100\,
\frac{
e_{\rho}^{\mathsf I}(\delta_C^\star,T_{\mathrm{tot}})
-
e_{\rho}^{\mathsf J}(\delta_C^\star,T_{\mathrm{tot}})
}{
e_{\rho}^{\mathsf I}(\delta_C^\star,T_{\mathrm{tot}})
}.
\end{equation*}
This measures the percentage reduction in state error obtained by \(\mathsf J\) relative to the fixed-ideal baseline \(\mathsf I\). Also define the \emph{oracle-gap} as
\begin{equation*}
\Gamma_{\rho}(\delta_C^\star,T_{\mathrm{tot}})
:=
100\,
\frac{
e_{\rho}^{\mathsf I}(\delta_C^\star,T_{\mathrm{tot}})
-
e_{\rho}^{\mathsf J}(\delta_C^\star,T_{\mathrm{tot}})
}{
e_{\rho}^{\mathsf I}(\delta_C^\star,T_{\mathrm{tot}})
-
e_{\rho}^{\mathsf O}(\delta_C^\star,T_{\mathrm{tot}})
}.
\end{equation*}
Here the denominator is the full improvement from \(\mathsf I\) to the oracle \(\mathsf O\), so \(\Gamma_\rho=100\) means that \(\mathsf J\) matches the oracle, while \(\Gamma_\rho=0\) indicates that it gives no improvement over \(\mathsf I\). 

Fig.~\ref{fig:phase_diagram_joint_estimation_a} plots \(G_\rho\), and Fig.~\ref{fig:phase_diagram_joint_estimation_b} plots \(\Gamma_\rho\). Both gains are small when \(\delta_C^\star\) is close to zero, since $\mathbf{C}_r \approx \mathbf{I}_r$ (c.f., \eqref{eq:sim_setup_deltaC}). Gains are also marginal when \(T_{\mathrm{tot}}\) is small, as there are not enough measurements to accurately estimate the confusion matrices \(\mathbf{C}_r\). The sizable gains appear when both \(\delta_C^\star\) and \(T_{\mathrm{tot}}\) are large. In that regime, readout error is strong enough to matter, and the data are sufficient for \(\mathsf J\) to learn it well.

\begin{figure}[t]
\centering
\subfloat[Communication tradeoff\label{fig:pareto_tradeoff_a}]{%
\includegraphics[width=\columnwidth]{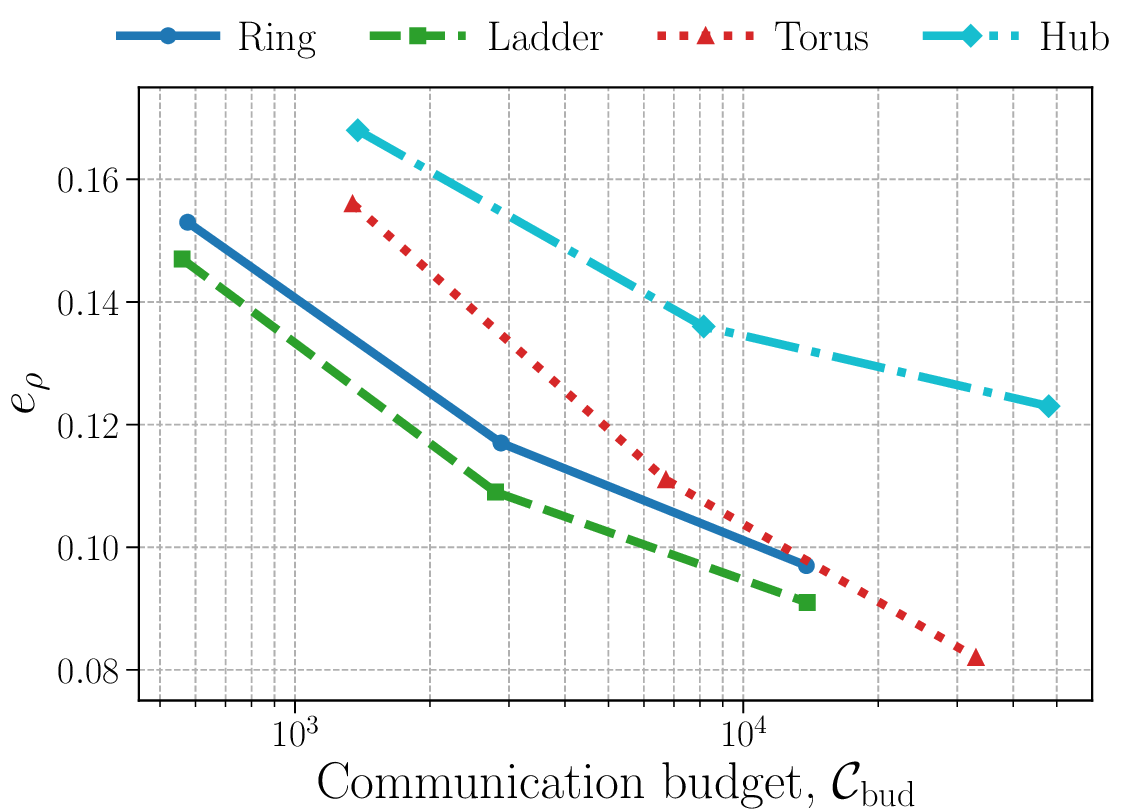}
}
\vspace{0.35em}
\subfloat[Computation tradeoff\label{fig:pareto_tradeoff_b}]{%
\includegraphics[width=\columnwidth]{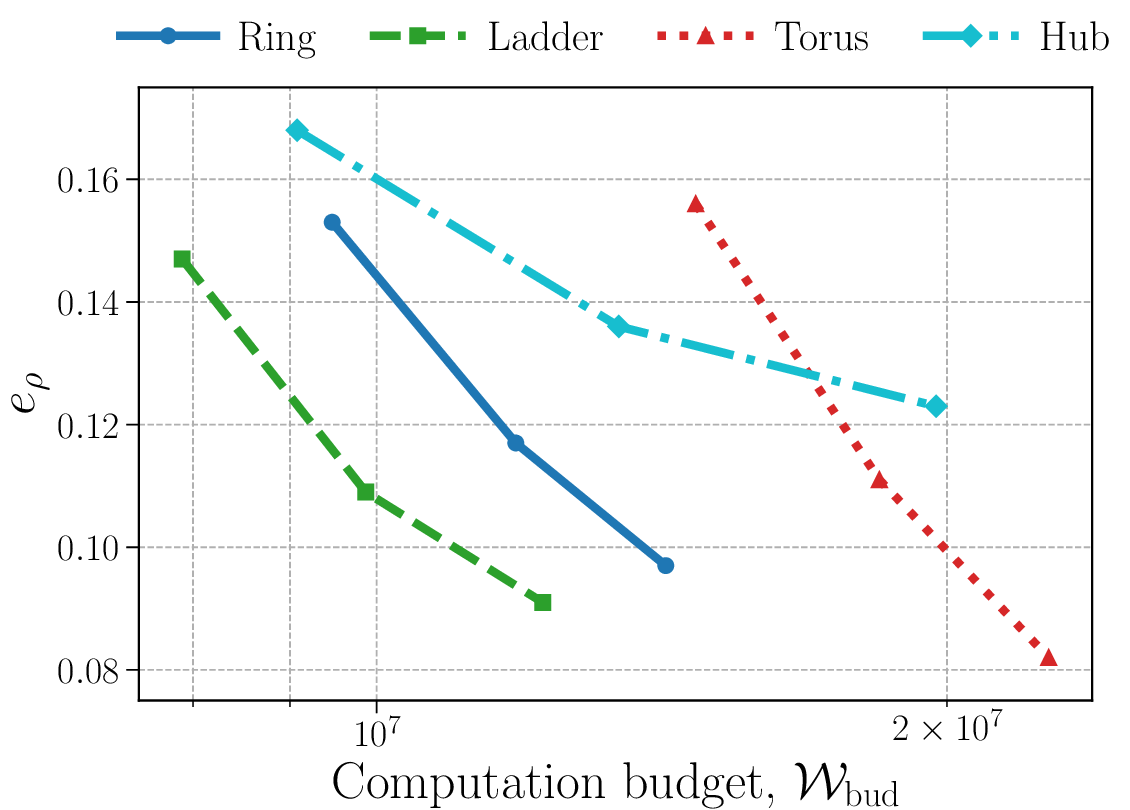}
}
\caption{Recovery-cost tradeoff for the four graph geometries.}
\label{fig:pareto_tradeoff}
\end{figure}

Fig.~\ref{fig:pareto_tradeoff} demonstrates the tradeoff between state recovery and algorithmic cost across the four considered graph geometries. The vertical axis is the state error \(e_\rho\), and the horizontal one is the communication cost defined as
\begin{equation*}
\mathcal C_{\mathrm{bud}}
:=
\bar L
\sum_{(r,r')\in\mathcal O}
4^{q_{rr'}},
\end{equation*}
and the computation cost
\begin{equation*}
\mathcal W_{\mathrm{bud}}
:=
\bar L
\left(
\sum_{r=1}^R 4^{q_r}
+
\sum_{r=1}^R M_r^2
\right),
\end{equation*}
where \(q_{rr'}\) is the number of qubits in the overlap between regions \(r\) and \(r'\), and \(\bar L\) is the average number of inner ADMM iterations needed to reach a fixed stopping tolerance. Fig.~\ref{fig:pareto_tradeoff_a} plots \(e_\rho\) versus \(\mathcal C_{\mathrm{bud}}\), and Fig.~\ref{fig:pareto_tradeoff_b} plots \(e_\rho\) versus \(\mathcal W_{\mathrm{bud}}\).

While we demonstrated earlier that \textsc{Torus} achieves the lowest state error, Fig.~\ref{fig:pareto_tradeoff_b} reveals that this happens at the largest cost, especially in computation. The \textsc{Ring} is communication and computation efficient, but its sparse overlap pattern limits recovery. The \textsc{Hub} is not attractive here, as its shared core increases coordination cost without giving the best recovery. The \textsc{Ladder} gives the best overall balance. It improves over the \textsc{Ring}, avoids the communication burden of the \textsc{Torus}, and avoids the bottleneck of the \textsc{Hub}.

\begin{table*}[t]
\centering
\caption{Benchmark summary for the four graph geometries. Lower is better for \(e_\rho\), \(e_C\), \(\mathcal C_{\mathrm{bud}}\), and \(\mathcal W_{\mathrm{bud}}\). Higher is better for \(G_\rho\) and \(\Gamma_\rho\).}
\label{tab:benchmark_summary}
\setlength{\tabcolsep}{4.2pt}
\begin{tabular}{lccccccccccc}
\toprule
\multirow{2}{*}{Geometry}
& \multicolumn{3}{c}{Structure}
& \multicolumn{3}{c}{State error}
& \multicolumn{1}{c}{Confusion}
& \multicolumn{2}{c}{Joint-estimation benefit}
& \multicolumn{2}{c}{Cost} \\
\cmidrule(lr){2-4}
\cmidrule(lr){5-7}
\cmidrule(lr){8-8}
\cmidrule(lr){9-10}
\cmidrule(lr){11-12}
& \(q\) & \(R\) & \(q_{\mathrm{ov}}\)
& \(e_\rho^{\mathsf I}\) & \(e_\rho^{\mathsf J}\) & \(e_\rho^{\mathsf O}\)
& \(e_C^{\mathsf J}\)
& \(G_\rho\,[\%]\) & \(\Gamma_\rho\,[\%]\)
& \(\mathcal C_{\mathrm{bud}}\) & \(\mathcal W_{\mathrm{bud}}\) \\
\midrule
Ring
& \(12\) & \(6\) & \(2\)
& \(0.146\) & \(0.117\) & \(0.094\)
& \(0.086\)
& \(19.9\) & \(55.8\)
& \(2.88\times 10^3\) & \(1.18\times 10^7\) \\

Ladder
& \(12\) & \(6\) & \(2\)
& \(\mathbf{0.141}\) & \(\mathbf{0.109}\) & \(0.087\)
& \(\mathbf{0.074}\)
& \(22.7\) & \(\mathbf{59.3}\)
& \(\mathbf{2.80\times 10^3}\) & \(\mathbf{9.87\times 10^6}\) \\

Torus
& \(16\) & \(9\) & \(2\)
& \(0.151\) & \(0.111\) & \(\mathbf{0.081}\)
& \(0.079\)
& \(\mathbf{26.5}\) & \(57.1\)
& \(6.72\times 10^3\) & \(1.84\times 10^7\) \\

Hub
& \(14\) & \(6\) & \(2\)
& \(0.161\) & \(0.136\) & \(0.101\)
& \(0.118\)
& \(15.5\) & \(41.7\)
& \(8.16\times 10^3\) & \(1.34\times 10^7\) \\
\bottomrule
\end{tabular}
\end{table*}

Finally, Table~\ref{tab:benchmark_summary} summarizes the main results at \(q_{\mathrm{ov}}=2\). The same pattern as in the figures appears here. The \(\mathsf J\) estimator improves over \(\mathsf I\) for all four geometries. The \textsc{Ladder} exhibits the best overall balance, where the \(\mathsf J\) estimator has the lowest state and confusion-matrix errors, the largest oracle gap, and the lowest communication and computation costs. The \textsc{Torus} gives the best oracle state error and the largest state-recovery gain, but at a much higher cost. The \textsc{Ring} remains a stable baseline, while the \textsc{Hub} is the weakest overall because its shared core increases communication and computation costs, and slows late-stage improvement.

\section{Conclusion}
\label{sec:conclusion}
This paper developed a framework for joint regional quantum state and readout estimation in multiqubit systems with overlapping structure. Each region is assigned a local density operator and a local confusion matrix, while neighboring regions are coupled through reduced-state consistency on shared subsystems. This leads to a structured bilinear optimization task, and a distributed proximal alternating ADMM method. The analytical guarantees established include local identifiability, local quadratic growth of the population misfit, and convergence of the inner state-update step. Simulations demonstrated that joint estimation improves state recovery over the ideal baseline, recovers a substantial portion of oracle performance, and reveals a performance-cost tradeoff across various graph topologies. 


\appendices

\section{Proof of Theorem~\ref{thm:local_identifiability}}
\label{app:proof_local_identifiability}

\begin{proof}
Let
\(
\Delta\bm{\rho}_r:=\bm{\rho}_r-\bm{\rho}_r^\ast
\)
and
\(
\Delta\mathbf{C}_r:=\mathbf{C}_r-\mathbf{C}_r^\ast
\)
for all \(r\). Since both
\(
(\{\bm{\rho}_r\},\{\mathbf{C}_r\})
\)
and
\(
(\{\bm{\rho}_r^\ast\},\{\mathbf{C}_r^\ast\})
\)
satisfy
\eqref{eq:multi_region_estimation_problem_state}--\eqref{eq:multi_region_estimation_problem_overlap}, their difference satisfies the linearized feasibility conditions in \eqref{eq:linearized_feasible_set_ast}; hence
\[
(\{\Delta\bm{\rho}_r\},\{\Delta\mathbf{C}_r\})\in\mathcal{T}_\ast.
\]
By the assumption of Theorem~\ref{thm:local_identifiability}, for any pair \((\{\Delta \bm{\rho}_r\}, \{\Delta \mathbf{C}_r\}) \in \mathcal{T}_\ast\), we have:
\[
\mathcal{A}_\ast(\{\Delta\bm{\rho}_r\},\{\Delta\mathbf{C}_r\})=0
\ \Longrightarrow\
\Delta\bm{\rho}_r=0,\ \Delta\mathbf{C}_r=0,\ \forall r,
\]
which is a condition on the linearized map \(\mathcal{A}_\ast\) over the linearized feasible set \(\mathcal{T}_\ast\). It is not the conclusion
\[
\mathbf{C}_r\bm{\pi}_r(\bm{\rho}_r)
=
\mathbf{C}_r^\ast\bm{\pi}_r(\bm{\rho}_r^\ast),
\; \forall r
\;\Longrightarrow\;
\bm{\rho}_r=\bm{\rho}_r^\ast,\ \
\mathbf{C}_r=\mathbf{C}_r^\ast,\; \forall r,
\]
which concerns the nonlinear model. The proof below shows that the former implies the latter locally. In that order, define the ambient space
\[
\mathcal{V}
:=
\prod_{r=1}^R \mathrm{Herm}(\mathcal{H}_{q_r})
\times
\prod_{r=1}^R \mathbb{R}^{M_r\times M_r}.
\]
Then \(\dim(\mathcal{V})<\infty\). By \eqref{eq:linearized_feasible_set_ast}, \(\mathcal{T}_\ast\) is a linear subspace of \(\mathcal{V}\). Hence \(\mathcal{T}_\ast\) is finite-dimensional and closed. Define
\[
\mathcal{S}
:=
\left\{
(\{\Delta\bm{\rho}_r\},\{\Delta\mathbf{C}_r\})\in\mathcal{T}_\ast:
d_\ast=1
\right\}.
\]
Then \(\mathcal{S}\) is compact. Since
\[
(\{\Delta\bm{\rho}_r\},\{\Delta\mathbf{C}_r\})
\mapsto
\left\|
\mathcal{A}_\ast(\{\Delta\bm{\rho}_r\},\{\Delta\mathbf{C}_r\})
\right\|_2
\]
is continuous on \(\mathcal{S}\), it attains its minimum there. Let
\[
\alpha
:=
\min_{(\{\Delta\bm{\rho}_r\},\{\Delta\mathbf{C}_r\})\in\mathcal{S}}
\left\|
\mathcal{A}_\ast(\{\Delta\bm{\rho}_r\},\{\Delta\mathbf{C}_r\})
\right\|_2 .
\]
If \(\alpha=0\), then there exists
\(
(\{\Delta\bm{\rho}_r\},\{\Delta\mathbf{C}_r\})\in\mathcal{S}
\)
such that
\[
\mathcal{A}_\ast(\{\Delta\bm{\rho}_r\},\{\Delta\mathbf{C}_r\})=0,
\]
which implies
\(
\Delta\bm{\rho}_r=0
\)
and
\(
\Delta\mathbf{C}_r=0
\)
for all \(r\), contradicting \(d_\ast=1\). Hence \(\alpha>0\).

Now let
\(
(\{\Delta\bm{\rho}_r\},\{\Delta\mathbf{C}_r\})\in\mathcal{T}_\ast
\).
If \(d_\ast=0\), then \eqref{eq:appendix_linear_lower_bound_ast} is trivial. If \(d_\ast>0\), then
\[
\left(
\left\{\frac{\Delta\bm{\rho}_r}{d_\ast}\right\},
\left\{\frac{\Delta\mathbf{C}_r}{d_\ast}\right\}
\right)\in\mathcal{S},
\]
and therefore, by linearity of \(\mathcal{A}_\ast\),
\begin{equation}
\label{eq:appendix_linear_lower_bound_ast}
\begin{aligned}
\left\|
\mathcal{A}_\ast(\{\Delta\bm{\rho}_r\},\{\Delta\mathbf{C}_r\})
\right\|_2
&=
d_\ast
\left\|
\mathcal{A}_\ast\!\left(
\left\{\frac{\Delta\bm{\rho}_r}{d_\ast}\right\},
\left\{\frac{\Delta\mathbf{C}_r}{d_\ast}\right\}
\right)
\right\|_2 \\
&\ge
\alpha\, d_\ast .
\end{aligned}
\end{equation}
where \(d_\ast\) is defined in \eqref{eq:local_distance_ast}.

Next, using the linearity of \(\bm{\pi}_r(\cdot)\), we obtain for each \(r\),
\begin{equation}
\label{eq:appendix_exact_prediction_expansion_ast}
\begin{aligned}
\mathbf{C}_r\bm{\pi}_r(\bm{\rho}_r)
-\mathbf{C}_r^\ast\bm{\pi}_r(\bm{\rho}_r^\ast)
&=
\Delta\mathbf{C}_r\,\bm{\pi}_r(\bm{\rho}_r^\ast)
+
\mathbf{C}_r^\ast\,\bm{\pi}_r(\Delta\bm{\rho}_r)
\\
&\quad
+
\Delta\mathbf{C}_r\,\bm{\pi}_r(\Delta\bm{\rho}_r).
\end{aligned}
\end{equation}
For each \(r\), define
\[
\kappa_r
:=
\max_{\substack{X\in \mathrm{Herm}(\mathcal{H}_{q_r})\\ \|X\|_F=1}}
\|\bm{\pi}_r(X)\|_2 .
\]
Since \(\mathrm{Herm}(\mathcal{H}_{q_r})\) is finite-dimensional, its unit Frobenius sphere is compact, and
\(
X\mapsto \|\bm{\pi}_r(X)\|_2
\)
is continuous, one has \(\kappa_r<\infty\). Hence, for every Hermitian \(\Delta\bm{\rho}_r\),
\[
\|\bm{\pi}_r(\Delta\bm{\rho}_r)\|_2
=
\|\Delta\bm{\rho}_r\|_F
\left\|
\bm{\pi}_r\!\left(
\frac{\Delta\bm{\rho}_r}{\|\Delta\bm{\rho}_r\|_F}
\right)
\right\|_2
\le
\kappa_r\|\Delta\bm{\rho}_r\|_F
\]
whenever \(\Delta\bm{\rho}_r\neq 0\), and the same inequality is trivial when \(\Delta\bm{\rho}_r=0\). Let
\(
\kappa:=\max_r \kappa_r
\).
Then
\begin{align}
&
\left(
\sum_{r=1}^R
\|\Delta\mathbf{C}_r\,\bm{\pi}_r(\Delta\bm{\rho}_r)\|_2^2
\right)^{1/2}
\notag\\
&\le
\kappa
\left(
\sum_{r=1}^R
\|\Delta\mathbf{C}_r\|_F^2\|\Delta\bm{\rho}_r\|_F^2
\right)^{1/2}
\notag\\
&\le
\kappa
\left(
\sum_{r=1}^R \|\Delta\mathbf{C}_r\|_F^2
\right)^{1/2}
\left(
\sum_{r=1}^R \|\Delta\bm{\rho}_r\|_F^2
\right)^{1/2}
\notag\\
&\le
\frac{\kappa}{2}\, d_\ast^2 .
\label{eq:appendix_remainder_bound_ast}
\end{align}

Combining \eqref{eq:local_population_misfit_ast}, \eqref{eq:appendix_exact_prediction_expansion_ast}, \eqref{eq:appendix_linear_lower_bound_ast}, and \eqref{eq:appendix_remainder_bound_ast} gives
\begin{align}
\sqrt{2\,\mathcal{Q}_\ast\bigl(\{\bm{\rho}_r\},\{\mathbf{C}_r\}\bigr)}
&=
\left(
\sum_{r=1}^R
\left\|
\mathbf{C}_r\bm{\pi}_r(\bm{\rho}_r)
-
\mathbf{C}_r^\ast\bm{\pi}_r(\bm{\rho}_r^\ast)
\right\|_2^2
\right)^{1/2}
\notag\\
&\ge
\alpha\, d_\ast-\frac{\kappa}{2}\,d_\ast^2 .
\label{eq:appendix_master_lower_bound_ast}
\end{align}

Choose \(\varepsilon>0\) so that
\(
\varepsilon<\alpha/\kappa
\)
when \(\kappa>0\); if \(\kappa=0\), any \(\varepsilon>0\) will do. Then every feasible pair with \(d_\ast<\varepsilon\) satisfies
\[
\sqrt{2\,\mathcal{Q}_\ast\bigl(\{\bm{\rho}_r\},\{\mathbf{C}_r\}\bigr)}
\ge
\frac{\alpha}{2}\, d_\ast,
\]
and therefore
\[
\mathcal{Q}_\ast\bigl(\{\bm{\rho}_r\},\{\mathbf{C}_r\}\bigr)
\ge
\frac{\alpha^2}{8}\, d_\ast^2.
\]

Hence, the quadratic-growth claim in Theorem~1 holds with $c := \alpha^2/8$.

Finally, let \(d_\ast<\varepsilon\) and suppose that
\[
\mathbf{C}_r\bm{\pi}_r(\bm{\rho}_r)
=
\mathbf{C}_r^\ast\bm{\pi}_r(\bm{\rho}_r^\ast),
\qquad
\forall r.
\]
Then
\[
\mathcal{Q}_\ast\bigl(\{\bm{\rho}_r\},\{\mathbf{C}_r\}\bigr)=0.
\]
Hence
\[
0
=
\mathcal{Q}_\ast\bigl(\{\bm{\rho}_r\},\{\mathbf{C}_r\}\bigr)
\ge
c\, d_\ast^2 .
\]
Since \(c>0\), it follows that \(d_\ast=0\). By \eqref{eq:local_distance_ast},
\[
\bm{\rho}_r=\bm{\rho}_r^\ast,
\qquad
\mathbf{C}_r=\mathbf{C}_r^\ast,
\qquad
r=1,\dots,R.
\]
Thus the model is locally identifiable at
\(
(\{\bm{\rho}_r^\ast\},\{\mathbf{C}_r^\ast\})
\).
\end{proof}

\section{Proof of Theorem~\ref{thm:proximal_state_admm_convergence}}
\label{app:proof_proximal_state_admm_convergence}

\begin{proof}
Fix an outer iterate \(k\), and regard \(\{\mathbf{C}_r^k\}_{r=1}^R\) as fixed. For each \(r\), the map
$
\bm{\rho}_r \mapsto \bm{\pi}_r(\bm{\rho}_r)
$
is linear; hence so is
$
\bm{\rho}_r \mapsto \mathbf{C}_r^k \bm{\pi}_r(\bm{\rho}_r).
$
Therefore, since
\(
D(\hat{\bm{\pi}}_r,\cdot)
\)
is closed, proper, and convex by assumption, the function
\[
f_r^k(\bm{\rho}_r)
=
D\!\left(\hat{\bm{\pi}}_r,\mathbf{C}_r^k\bm{\pi}_r(\bm{\rho}_r)\right)
+\frac{\gamma_\rho}{2}\|\bm{\rho}_r-\bm{\rho}_r^k\|_F^2
\]
is closed, proper, and convex on \(\mathcal D_r\). Since \(\gamma_\rho>0\), the term
\(
\frac{\gamma_\rho}{2}\|\bm{\rho}_r-\bm{\rho}_r^k\|_F^2
\)
is strongly convex, and hence
\begin{equation}
\label{eq:appendix_strong_convex_sum}
\sum_{r=1}^R f_r^k(\bm{\rho}_r)
\end{equation}
is strongly convex in the block variable \(\{\bm{\rho}_r\}_{r=1}^R\).

Next, \(\mathcal D_r\) is closed and convex for every \(r\), and each overlap map
$
\bm{\rho}_r \mapsto \bm{\rho}_r[r']
$
is linear. Hence the feasible set of \eqref{eq:proximal_state_subproblem_overlap} is closed and convex.

To verify feasibility, choose
\[
\bm{\rho}_r=\frac{1}{Q_r}\mathbf I_r,
\qquad r=1,\dots,R,
\]
and
\[
\bm{\rho}_{rr'}=\frac{1}{Q_{rr'}}\mathbf I_{rr'},
\qquad (r,r')\in\mathcal O,
\]
where \(Q_r=2^{q_r}\) and \(Q_{rr'}=2^{q_{rr'}}\). Then
\(
\bm{\rho}_r\in\mathcal D_r
\)
for all \(r\), and
\[
\bm{\rho}_r[r']
=
\frac{1}{Q_{rr'}}\mathbf I_{rr'}
=
\bm{\rho}_{rr'},
\qquad
\bm{\rho}_{r'}[r]
=
\frac{1}{Q_{rr'}}\mathbf I_{rr'}
=
\bm{\rho}_{r'r},
\]
for all \((r,r')\in\mathcal O\). Thus \eqref{eq:proximal_state_subproblem_overlap} is a feasible convex two-block problem in the variables
\[
\{\bm{\rho}_r\}_{r=1}^R
\qquad\text{and}\qquad
\{\bm{\rho}_{rr'}\}_{(r,r')\in\mathcal O}.
\]

The iterations \eqref{eq:admm_generic_updates} are precisely the two-block ADMM iterations for this problem. Since \eqref{eq:proximal_state_subproblem_overlap} is closed, proper, convex, feasible, and has an optimal solution by assumption, the standard two-block ADMM convergence theorem applies; see, e.g., \cite[Sec.~3.2.1 and Appendix~A]{Boyd2011} and \cite{EcksteinBertsekas1992}. Therefore, for any \(\beta>0\) and any initialization,
\begin{equation}
\label{eq:appendix_admm_residual_zero}
\bm{\delta}_{rr'}^\ell \to 0,
\qquad
\bm{\delta}_{r'r}^\ell \to 0,
\qquad
\forall (r,r')\in\mathcal O,
\end{equation}
and the objective values of \eqref{eq:proximal_state_subproblem_overlap} converge to its optimal value.

It remains to prove convergence of the primal sequence. First, the primal optimizer is unique. Let
\[
(\{\bm{\rho}_r\},\{\bm{\rho}_{rr'}\})
\quad\text{and}\quad
(\{\widetilde{\bm{\rho}}_r\},\{\widetilde{\bm{\rho}}_{rr'}\})
\]
be two primal optimal solutions of \eqref{eq:proximal_state_subproblem_overlap}. For any \(t\in(0,1)\), by convexity of the feasible set,
\[
(\{t\bm{\rho}_r+(1-t)\widetilde{\bm{\rho}}_r\},
 \{t\bm{\rho}_{rr'}+(1-t)\widetilde{\bm{\rho}}_{rr'}\})
\]
is feasible. Since \eqref{eq:appendix_strong_convex_sum} is strongly convex in \(\{\bm{\rho}_r\}\), if there exists \(r\) such that
\(
\bm{\rho}_r\neq \widetilde{\bm{\rho}}_r
\),
then
\begin{align*}
&\sum_{r=1}^R f_r^k\!\bigl(t\bm{\rho}_r+(1-t)\widetilde{\bm{\rho}}_r\bigr)
\\
&\qquad<
t\sum_{r=1}^R f_r^k(\bm{\rho}_r)
+
(1-t)\sum_{r=1}^R f_r^k(\widetilde{\bm{\rho}}_r),
\end{align*}
which contradicts optimality of both points. Hence
\begin{equation}
\label{eq:appendix_unique_rho_block}
\bm{\rho}_r=\widetilde{\bm{\rho}}_r,
\qquad
\forall r.
\end{equation}
Using the constraints in \eqref{eq:proximal_state_subproblem_overlap},
\[
\bm{\rho}_{rr'}
=
\bm{\rho}_r[r']
=
\widetilde{\bm{\rho}}_r[r']
=
\widetilde{\bm{\rho}}_{rr'},
\qquad
(r,r')\in\mathcal O,
\]
and therefore the primal optimizer of \eqref{eq:proximal_state_subproblem_overlap} is unique. Denote it by
\[
(\{\bm{\rho}_r^\star\}_{r=1}^R,\{\bm{\rho}_{rr'}^\star\}_{(r,r')\in\mathcal O}).
\]

Next, for each \(r\), the sequence \(\{\bm{\rho}_r^\ell\}_{\ell\ge 0}\) lies in \(\mathcal D_r\). Since
\[
\mathcal D_r
=
\{\bm{\rho}_r\succeq 0:\operatorname{Tr}(\bm{\rho}_r)=1\},
\]
one has
\[
\|\bm{\rho}_r^\ell\|_F
\le
\operatorname{Tr}(\bm{\rho}_r^\ell)
=
1,
\qquad \forall \ell,
\]
so \(\{\bm{\rho}_r^\ell\}_{\ell\ge 0}\) is bounded. Moreover,
\[
\bm{\rho}_{rr'}^\ell
=
\bm{\rho}_r^\ell[r']-\bm{\delta}_{rr'}^\ell,
\qquad
(r,r')\in\mathcal O,
\]
and each partial-trace map is linear and bounded, so \(\{\bm{\rho}_r^\ell[r']\}_{\ell\ge 0}\) is bounded. Together with \eqref{eq:appendix_admm_residual_zero}, this yields boundedness of \(\{\bm{\rho}_{rr'}^\ell\}_{\ell\ge 0}\). Hence the primal sequence
\begin{equation}
\label{eq:appendix_primal_sequence}
\left\{
\Bigl(\{\bm{\rho}_r^\ell\}_{r=1}^R,\{\bm{\rho}_{rr'}^\ell\}_{(r,r')\in\mathcal O}\Bigr)
\right\}_{\ell\ge 0}
\end{equation}
is bounded in a finite-dimensional space, and therefore has at least one cluster point. Let
\[
\Bigl(\{\bar{\bm{\rho}}_r\}_{r=1}^R,\{\bar{\bm{\rho}}_{rr'}\}_{(r,r')\in\mathcal O}\Bigr)
\]
be any cluster point of the sequence in \eqref{eq:appendix_primal_sequence}. By \cite[Sec.~3.2.1 and Appendix~A]{Boyd2011} and \cite[Thm.~8]{EcksteinBertsekas1992}, every cluster point of the primal sequence is a primal optimal solution. Since the primal optimal solution is unique, necessarily
\[
\bar{\bm{\rho}}_r=\bm{\rho}_r^\star,
\qquad r=1,\dots,R,
\]
and
\[
\bar{\bm{\rho}}_{rr'}=\bm{\rho}_{rr'}^\star,
\qquad (r,r')\in\mathcal O.
\]
Thus every cluster point of \eqref{eq:appendix_primal_sequence} coincides with
\[
\Bigl(\{\bm{\rho}_r^\star\}_{r=1}^R,\{\bm{\rho}_{rr'}^\star\}_{(r,r')\in\mathcal O}\Bigr).
\]
Therefore,
\[
\bm{\rho}_r^\ell \to \bm{\rho}_r^\star,
\qquad
r=1,\dots,R,
\]
and
\[
\bm{\rho}_{rr'}^\ell \to \bm{\rho}_{rr'}^\star,
\qquad
(r,r')\in\mathcal O.
\]

Hence the residuals satisfy \eqref{eq:appendix_admm_residual_zero}, the objective values converge to the optimal value of \eqref{eq:proximal_state_subproblem_overlap}, and the primal iterates converge to the unique solution.
\end{proof}

\section{Proof of Proposition~\ref{prop:quantitative_scaling_advantage}}
\label{app:proof_quantitative_scaling_advantage}

\begin{proof}
By definition,
\[
\begin{aligned}
P_{\mathrm{reg}}
&=
\sum_{r=1}^R \bigl[(4^{q_r}-1)+M_r(M_r-1)\bigr], \\
P_{\mathrm{glob}}
&=
(4^q-1)+M(M-1).
\end{aligned}
\]
Using \(4^{q_r}\le M_r\le \mu\,4^{q_r}\) for all \(r\),
\begin{align*}
P_{\mathrm{reg}}
&\le \sum_{r=1}^R \bigl(4^{q_r}+M_r^2\bigr)
 \le \sum_{r=1}^R \bigl(4^{q_r}+\mu^2 16^{q_r}\bigr) \\
&\le (1+\mu^2)\sum_{r=1}^R 16^{q_r}
 \le (1+\mu^2)\,R\,16^{q_{\max}},
\end{align*}
while, since \(M\ge 4^q\),
\[
P_{\mathrm{glob}}
\ge M(M-1)\ge 4^q(4^q-1)=16^q-4^q.
\]
Hence
\[
F_{\mathrm{mem}}
=
\frac{P_{\mathrm{glob}}}{P_{\mathrm{reg}}}
\ge
\frac{16^q-4^q}{(1+\mu^2)\,R\,16^{q_{\max}}}.
\]

Next, by definition,
\[
N_{\mathrm{comm}}
=
\sum_{r=1}^R\sum_{r':(r,r')\in\mathcal O}4^{q_{rr'}}.
\]
Since \(q_{rr'}\le q_{\mathrm{ov},\max}\) for all \((r,r')\in\mathcal O\), and each \(r\) has degree at most \(d_{\max}\),
\[
N_{\mathrm{comm}}
\le
\sum_{r=1}^R\sum_{r':(r,r')\in\mathcal O}4^{q_{\mathrm{ov},\max}}
\le
d_{\max}R\,4^{q_{\mathrm{ov},\max}}.
\]

Also, using \(M_r\ge 4^{q_r}\),
\begin{align*}
P_{\mathrm{reg}}
&\ge \sum_{r=1}^R M_r(M_r-1)
 \ge \sum_{r=1}^R 4^{q_r}(4^{q_r}-1) \\
&= \sum_{r=1}^R \bigl(16^{q_r}-4^{q_r}\bigr)
 \ge R\bigl(16^{q_{\min}}-4^{q_{\min}}\bigr).
\end{align*}
Therefore,
\[
\frac{N_{\mathrm{comm}}}{P_{\mathrm{reg}}}
\le
\frac{d_{\max}R\,4^{q_{\mathrm{ov},\max}}}{R(16^{q_{\min}}-4^{q_{\min}})}
=
\frac{d_{\max}}{1-4^{-q_{\min}}}\,
4^{\,q_{\mathrm{ov},\max}-2q_{\min}}.
\]

If \(q_{\max}\), \(q_{\mathrm{ov},\max}\), and \(d_{\max}\) remain uniformly bounded as \(q\) grows, then
\[
P_{\mathrm{reg}}=O(R),\qquad
N_{\mathrm{comm}}=O(R),\qquad
P_{\mathrm{glob}}=O(16^q),
\]
and the lower bound
\[
F_{\mathrm{mem}}
\ge
\frac{16^q-4^q}{(1+\mu^2)\,R\,16^{q_{\max}}}
\]
grows exponentially in \(q-q_{\max}\).
\end{proof}

\section{Proof of Remark~\ref{rem:kl_ml_map}}
\label{app:proof_kl_ml_map}

\begin{proof}
Fix a region \(r\). Under the noisy model, the recorded outcome at shot \(t\) is categorical with distribution \(\mathbf C_r\bm{\pi}_r(\bm{\rho}_r)\), i.e.,
\[
\Pr(y_{rt}=m\mid \bm{\rho}_r,\mathbf C_r)
=
[\mathbf C_r\bm{\pi}_r(\bm{\rho}_r)]_m,
\qquad m=1,\dots,M_r .
\]
Using the one-hot representation \(\mathbf y_{rt}\), this gives
\[
\Pr(\mathbf y_{rt}\mid \bm{\rho}_r,\mathbf C_r)
=
\prod_{m=1}^{M_r}
\bigl([\mathbf C_r\bm{\pi}_r(\bm{\rho}_r)]_m\bigr)^{[\mathbf y_{rt}]_m}.
\]
Hence, for \(T_r\) conditionally i.i.d.\ shots,
\[
\Pr(\{\mathbf y_{rt}\}_{t=1}^{T_r}\mid \bm{\rho}_r,\mathbf C_r)
=
\prod_{t=1}^{T_r}\prod_{m=1}^{M_r}
\bigl([\mathbf C_r\bm{\pi}_r(\bm{\rho}_r)]_m\bigr)^{[\mathbf y_{rt}]_m},
\]
and therefore
\begin{equation*}
\begin{aligned}
\log \Pr(\{\mathbf y_{rt}\}_{t=1}^{T_r} \mid  & \bm{\rho}_r,\mathbf C_r)
=
\\
& \sum_{t=1}^{T_r}\sum_{m=1}^{M_r}
[\mathbf y_{rt}]_m
\log\bigl([\mathbf C_r\bm{\pi}_r(\bm{\rho}_r)]_m\bigr).
\end{aligned}
\end{equation*}
By \eqref{eq:empirical_frequencies},
\[
\hat{\pi}_{r,m}
=
\frac{1}{T_r}\sum_{t=1}^{T_r}[\mathbf y_{rt}]_m,
\qquad m=1,\dots,M_r,
\]
so
\[
\log \Pr(\{\mathbf y_{rt}\}_{t=1}^{T_r}\mid \bm{\rho}_r,\mathbf C_r)
=
T_r\sum_{m=1}^{M_r}
\hat{\pi}_{r,m}\log\bigl([\mathbf C_r\bm{\pi}_r(\bm{\rho}_r)]_m\bigr).
\]
Thus
\begin{equation*}
\begin{aligned}
-\log \Pr(\{\mathbf y_{rt}\}_{t=1}^{T_r} \mid & \bm{\rho}_r,\mathbf C_r)
=
\\
& -T_r\sum_{m=1}^{M_r}\hat{\pi}_{r,m}\,
\log\bigl([\mathbf C_r\bm{\pi}_r(\bm{\rho}_r)]_m\bigr).
\end{aligned}
\end{equation*}
Adding \(T_r\sum_{m=1}^{M_r}\hat{\pi}_{r,m}\log \hat{\pi}_{r,m}\) (a constant) to both sides 
\[
-\log \Pr(\{\mathbf y_{rt}\}_{t=1}^{T_r}\mid \bm{\rho}_r,\mathbf C_r)
+
T_r\sum_{m=1}^{M_r}\hat{\pi}_{r,m}\log \hat{\pi}_{r,m}
\]
\[
=
T_r\sum_{m=1}^{M_r}\hat{\pi}_{r,m}
\log\frac{\hat{\pi}_{r,m}}{[\mathbf C_r\bm{\pi}_r(\bm{\rho}_r)]_m}
=
T_r\,\mathrm{KL} \bigl(\hat{\bm{\pi}}_r\,\|\,\mathbf C_r\bm{\pi}_r(\bm{\rho}_r)\bigr).
\]
Hence, up to an additive term depending only on the data,
\begin{align}
\nonumber
-\log \Pr(\{\mathbf y_{rt}\}_{t=1}^{T_r}\mid \bm{\rho}_r,\mathbf C_r)
&=
T_r\,\mathrm{KL} \bigl(\hat{\bm{\pi}}_r\,\|\,\mathbf C_r\bm{\pi}_r(\bm{\rho}_r)\bigr)
\\
& \qquad + \mathrm{const}.
\nonumber 
\end{align}
Summing over \(r=1,\dots,R\), the joint negative log-likelihood is, up to an additive constant independent of \((\{\bm{\rho}_r\},\{\mathbf C_r\})\),
\[
\sum_{r=1}^R
T_r\,\mathrm{KL} \bigl(\hat{\bm{\pi}}_r\,\|\,\mathbf C_r\bm{\pi}_r(\bm{\rho}_r)\bigr).
\]
Therefore, under \eqref{eq:multi_region_estimation_problem_state}--\eqref{eq:multi_region_estimation_problem_overlap}, the exact likelihood-equivalent form is the shot-weighted KL objective above. If all regions use the same number of shots, the factors \(T_r\) are equal and can be dropped without changing the minimizers, so \eqref{eq:multi_region_estimation_problem} is a constrained maximum-likelihood estimator (MLE). Adding a regularizer using the log-prior of \((\{\bm{\rho}_r\},\{\mathbf C_r\})\) gives the posterior distribution 
\begin{align}
\nonumber
& -\log \Pr( \{\bm{\rho}_r\},\{\mathbf C_r\} \mid \{\mathbf y_{rt}\}_{r,t}) = 
\\
\nonumber 
&\qquad -\log \Pr(\{\mathbf y_{rt}\}_{r,t}\mid \{\bm{\rho}_r\},\{\mathbf C_r\})
-\log \Pr \bigl(\{\bm{\rho}_r\},\{\mathbf C_r\}\bigr).
\end{align}
Minimizing this under \eqref{eq:multi_region_estimation_problem_state}--\eqref{eq:multi_region_estimation_problem_overlap}, yields  a constrained maximum a posteriori (MAP) estimator.
\end{proof}

\bibliographystyle{IEEEtran}
\bibliography{references}

\end{document}